\newtheorem{theorem}{Theorem}
\newtheorem{lemma}{Lemma}
\newtheorem{cor}{Corollary}
\newcommand{\avg}[1]{\left< #1 \right>}
\newcommand{\dr}[1]{\mathrm{#1}}
\newcommand{\nexp}[1]{\dr{e}^{#1}}
\newcommand{\x}{\dr{i}}
\newcommand{\laplace}{\mathop{{}\!\bigtriangleup}\nolimits\!}
\DeclareMathAlphabet{\mathbbold}{U}{bbold}{m}{n}
\renewcommand{\vec}[1]{\boldsymbol{\mathrm{#1}}}
\begin{document}

\begin{center}{\Large \textbf{Fragmentation-induced localization and boundary charges in dimensions two and above}}\end{center}

\begin{center}
Julius Lehmann$^*$\textsuperscript{1},
Pablo Sala$^*$\textsuperscript{1},
Frank Pollmann\textsuperscript{1,2},
Tibor Rakovszky\textsuperscript{3}
\end{center}

\begin{center}
\textbf{ 1}   	Department of Physics, TFK, Technische Universit{\"a}t M{\"u}nchen, James-Franck-Stra{\ss}e 1, D-85748 Garching, Germany\\

\textbf{ 2} Munich Center for Quantum Science and Technology (MCQST), Schellingstr. 4, D-80799 M{\"u}nchen, Germany
\\
\textbf{ 3} Department of Physics, Stanford University, Stanford, CA 94305, USA
\\
*These authors contributed equally to this work
\end{center}

\begin{center}
\today
\end{center}


\section*{Abstract}
{\bf

We study higher dimensional models with symmetric correlated hoppings, which generalize a  one-dimensional model introduced in the context of dipole-conserving dynamics. 
We prove rigorously that whenever the local configuration space takes its smallest non-trivial value, these models exhibit localized behavior due to fragmentation, in any dimension. 
For the same class of models, we then construct a hierarchy of conserved quantities that are power-law localized at the boundary of the system with increasing powers. Combining these with Mazur's bound, we prove that boundary correlations are infinitely long lived, even when the bulk is not localized. We use our results to construct quantum Hamiltonians that exhibit the analogues of strong zero modes in two and higher dimensions. 
}

\vspace{10pt}
\noindent\rule{\textwidth}{1pt}
\tableofcontents\thispagestyle{fancy}
\noindent\rule{\textwidth}{1pt}
\vspace{10pt}

\section{Introduction}

In recent years, systems with unconventional  symmetries have attracted a lot of attention due to the wealth of exotic phenomena they display. In particular, the role of multipole-moment and subsystem symmetries have been explored extensively.
Both of these have been shown to lead to exotic low-energy features including fracton phases of matter, fractal quantum spin liquids and quantum criticality, Bose surfaces and UV/IR-mixing~\cite{Chamon05,Yoshida_2013,Vijay_2015,Vijay_16,Pretko_17,You_2018,you2020fracton,Devakul19,Seiberg_2020,gorantla2021lowenergy,you2021fractonic,Lake_2021}; as well as to unusual non-equilibrium behavior, such as unconventional  transport~\cite{Zhang_2020,gromov2020_fractonhydro, Morningstar_2020,Feldmeier_anomal,Iaconis_2019,Iaconis_2021, Glorioso_21} and Hilbert space fragmentation~\cite{Sala_PRX,khemani_localization_2020, moudgalya_thermalization_2019,Patil19, Tomasi19,Moudgalya_2022,khudorozhkov2021hilbert, Moudgalya_review, yang_hilbert-space_2020, Sanjay19,moudgalya_thermalization_2019, Moudgalya_2022}. Dipole-conservation naturally appears as an approximate symmetry of systems subject to a strong tilted potential, and some of the aforementioned phenomena have been observed experimentally in this context~\cite{Guardado_Sanchez_2020,Wang_21,Morong_2021,Scherg_nature,Kohlert_exp}.

Fragmentation in particular, appears naturally in systems that conserve multipole moments~\cite{Sala_PRX,khemani_localization_2020}. It manifests as an exponentially large number of dynamically disconnected sectors even after resolving the global conserved quantities. Such fragmentation (or, in the parlance of classical stochastic dynamics, reducibility~\cite{Ritort03}) comes in multiple flavors. One can, for example, distinguish \emph{weak} and \emph{strong} fragmentation, which give rise to distinct dynamical signatures~\cite{Sala_PRX,Moudgalya_2022}. A particularly striking example of the latter was found in Refs. \cite{Sala_PRX,khemani_localization_2020,Rakovszky_slioms}, where fragmentation was shown to result in localized dynamics that retains local memory of initial conditions. 
While such strong reducibility has long been known for kinetically constrained glassy models~\cite{Ritort03}, the recently studied examples, originating from dipole-conserving systems, were restricted to one spatial dimension~\cite{Rakovszky_slioms,Moudgalya_2022}.

In this paper we introduce a set of models, which we name ``discrete Laplacian models''. Their defining feature is a correlated hopping term that distributes particles symmetrically among all neighboring sites, resembling a discrete second derivative and generalizing the 1D model of Refs. \cite{Sala_PRX,khemani_localization_2020} to arbitrary lattices. We prove that the discrete Laplacian models all exhibit the same kind of localized behavior due to the strong fragmentation of their configuration spaces. This fragmentation originates from a combination of the correlated hopping of particles and a local constraint on the number of particles per site. Together these lead to a finite density of sites whose configuration remains frozen at all times, implying strong fragmentation.

Even away from the strongly fragmented limit, we find that these models exhibit localized behavior close to the boundaries of the system. 
We explain this by the presence of \emph{spatially modulated symmetries}, a concept introduced in Ref. \cite{Sala_21}. 
We show the discrete Laplacian models posses conserved quantities that, related to solutions of the discrete Laplace equation, are localized near the boundaries. These include exponentially localized charges at the corners of the lattice, but also other quantities that are \emph{algebraically} localized close to the boundary. In fact, we construct multiple families of such quantities, which decay with increasing powers of the distance away from the boundary. Just as in Ref. \cite{Sala_21}, we utilize Mazur's bound to show that these symmetries prevent the decay of correlations at the boundary, even when the bulk is not localized.

The idea of conserved quantities localized near the boundaries of a system bears resemblence to that of \emph{strong zero modes} (SZM) in quantum spin chains~\cite{Fendley_2012}. We will show that a modification of the discrete Laplacian models in the quantum setting can lead to phenomena analogous to SZM in two dimensions, with a system that exhibits degeneracies throughout its spectrum with open, but not with closed boundaries\footnote{A different construction of SZM in two-dimensional systems appeared recently in Ref.~\cite{Loredana_22}. SZM in the context of stochastic classical dynamics have been recently discussed in Ref. \cite{Klobas_22}.}. Nevertheless, our boundary modes differ in important ways from those of Ref. \cite{Fendley_2012}, as we discuss in detail below.

The remainder of the paper is organized as follows. In Sec.~\ref{sec:summary} we introduce a version of a discrete Laplacian model on a 2D square lattice and provide numerical results for the behavior of its bulk and boundary correlations, which will motivate our subsequent investigations. In Sec.~\ref{sec:localization}, we first generalize the model to arbitrary lattices and graphs and then provide a proof of localization in the case when the number of local configurations is strongly restricted. In Sec.~\ref{sec:spat_mod_charges} we turn to the construction of conserved charges localized at the boundaries and discuss their implications for boundary correlations. Finally, in Sec.~\ref{sec:SZM} we consider quantum versions of the discrete Laplacian models and we show how they can be modified to exhibit a phenomenology similar to strong zero modes. We conclude in Sec.~\ref{sec:conclusions}.

\section{Motivating example: Correlations in a 2D discrete Laplacian model }\label{sec:summary}

We begin by discussing numerical results for a particular 2D model. In subsequent sections, we will analytically explain the observed behavior, while also generalizing the model to arbitrary lattices.
\begin{figure}
    \centering
    \includegraphics[width=\linewidth]{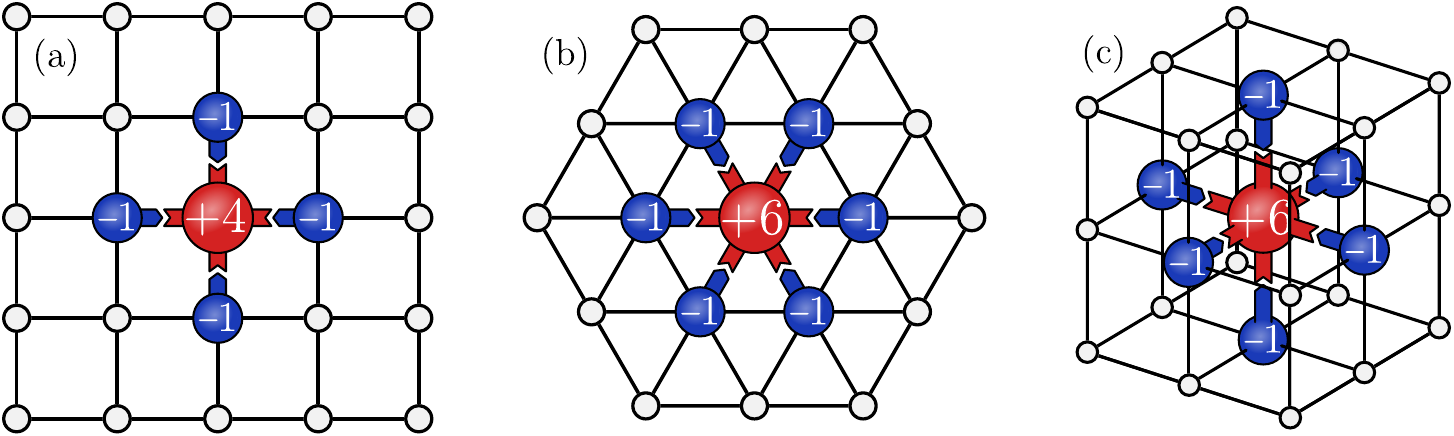}
    \caption{\textbf{Discrete Laplacian models.} Family of models where the elementary process corresponds to a simultaneous hopping of particles from a site to all its neighbors. Illustrated here for different two-dimensional and three dimensional lattices.}
    \label{fig:fig1}
\end{figure}
We consider models of classical stochastic dynamics, which allow for large-scale numerical simulations~\cite{Iaconis_2019,Feldmeier_anomal,Sala_21}. The reason is that we mostly explore ``classical'' phenomena that relies entirely on symmetries that are common to both the classical and quantum cases. We discuss features specific to quantum Hamiltonians at the end of the paper, in Sec.~\ref{sec:SZM}.

We consider a square lattice, where each site $\mathbf{r}$ hosts a classical spin $s_\mathbf{r}$ taking values in $s_{\vec{r}}\in\{-S,-S+1\dots,+S\}$ with (half)-integer $S$. We will equivalently refer to $s_\mathbf{r}$ as the amount of ``charge'' on site $\mathbf{r}$. The dynamics is generated by a set of local updates, or ``gates'', $G_+^{(\vec{r})}$, acting on site $\vec{r}$ and its four neighbors. The effect of a gate is to change $s_{\vec{r}} \to s_{\vec{r}}- 4$ and $s_{\vec{r}+\vec{\delta}} \to s_{\vec{r}+\vec{\delta}} + 1$ where $\vec{\delta}=\pm\vec{e}_x,\pm\vec{e}_y$ is any of the four elementary lattice vectors. In other words, it transfers one unit of charge from the central site $\mathbf{r}$ to all of its neighbors simultaneously (see Fig.~\ref{fig:fig1}a). This is well-defined for $S\geq 2$, and it is applied only if it does not lead to a violation of the local constraint $|s_\mathbf{r}| \leq S$.
For future reference we summarize the effect of the gate $G_+$ by a set of integers as:
\begin{equation}\label{eq:Gate_tilt}
	    G_+=\{n_{-1,0},n_{0,1},n_{0,0},n_{0,-1},n_{1,0}\}=\{1,1,-4,1,1\},
\end{equation}
such that $s_{i,j} \to s_{i,j}+n_{i,j}$ under the effect of a gate applied at $\vec{r}=0$. We also consider the inverse gate, which is obtained by replacing each $n_{i,j}$ with $-n_{i,j}$. 

To turn these local updates into a stochastic evolution, in each time step we randomly tile the whole system with non-overlapping gates and apply the resulting layer (for additional details, see Ref.~\cite{Sala_21}). At each location, we choose randomly between the following three possibilities with equal probabilities: (i) apply the gate $G_+$, (ii) apply its inverse, or (iii) no update is made. The defined model conserves several multipole moments of the charge, namely: the total charge, both components of the dipole moment, and the traceless part of the quadrupole moment. 
With open boundary condition, it also has a large number of additional conserved quantities, as we shall discuss below in Sec.~\ref{sec:spat_mod_charges}.

\begin{figure}
    \centering
    \includegraphics[width=\linewidth]{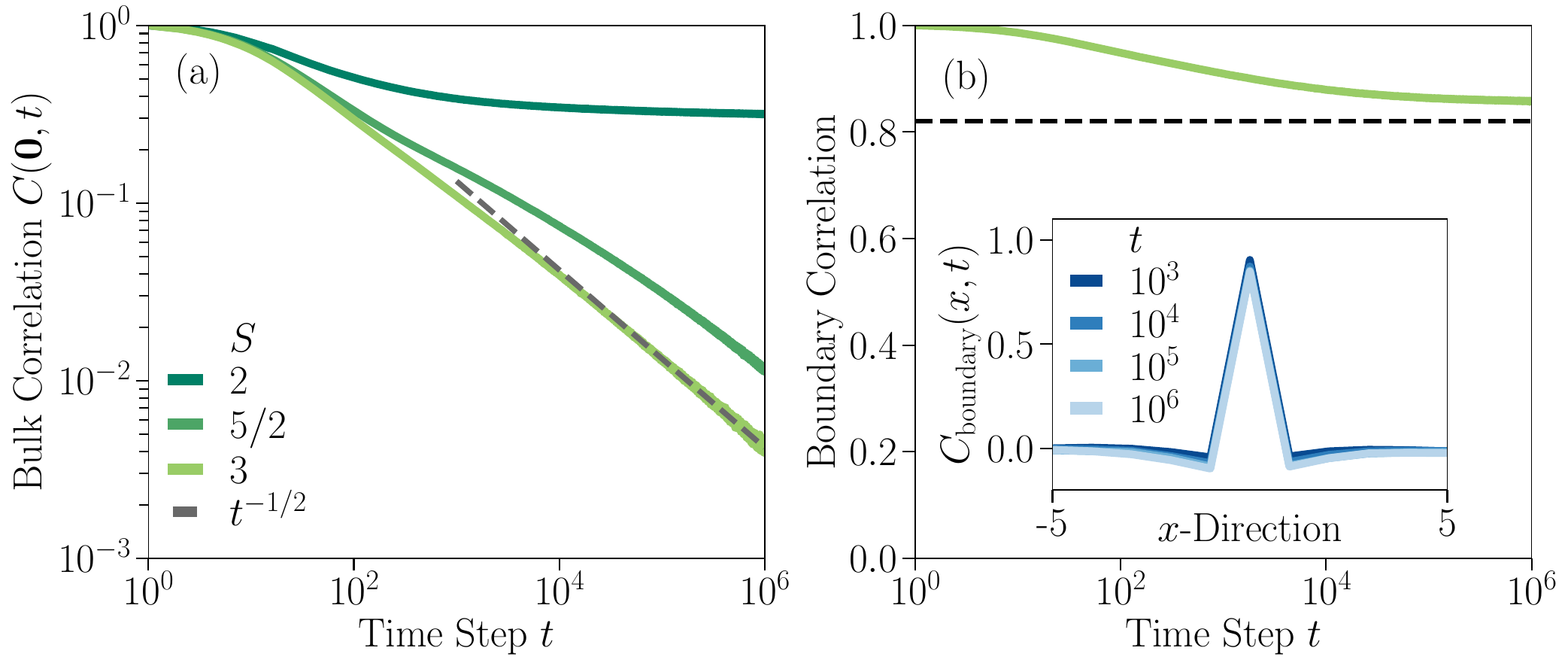}
   \caption{\textbf{Spin-spin correlations of $\boldsymbol{G_{+}}$} (a) Spin-spin autocorrelation $C(\Vec{0},t)$ for linear system size $L=300$ and different spins $S$. Data is averaged over $N=1000$ random initial configurations and circuit realizations. When $S=2$, the autocorrelation saturates to a constant value independent of system size, hence compatible with a strong fragmentation of the configuration space. For any spin $S>2$ the autocorrelation decays as $t^{-1/2}$ (shown by the gray dashed line. (b) Boundary autocorrelation for $S=3$. The black dashed line shows Mazur's bound as computed in Eq.~\eqref{eq:MazurBound}. Inset: Spatial distribution along the $x$-axis defined as $C_\dr{boundary}(x,t)\equiv C_{\vec{0}}(\vec{r}=(x,y=0),t)$ for $L=250$ and different times $t$. Unlike the weakly fragmented case, the correlation remains localized even at long times.}
   \label{fig:fig2}
\end{figure}

We study the dynamics of the system by investigating the behavior of ``infinite temperature'' connected charge-charge correlations, which are defined as
\begin{equation}\label{eq:Csum}
    C_{\vec{r}_0}(\vec{r},t)\equiv \frac{1}{D} \sum_{\mathbf{s}(0)} s_{\vec{r}_0}(0) \langle s_{\vec{r}}(t) \rangle_{\mathbf{s}(0)} - \Biggl(\frac{1}{D} \sum_{\mathbf{s}(0)} s_{\vec{r}_0}(0)\Biggr) \Biggl(\frac{1}{D} \sum_{\mathbf{s}(0)} \langle s_{\vec{r}}(t) \rangle_{\mathbf{s}(0)}\Biggr).
\end{equation}
Here, we are uniformly sampling over all possible global initial configurations $\mathbf{s}(0)$ of the system, whose total number is given by $D =(2S+1)^{N}$ for a system of $N$ sites. $\mathbf{s}(t)$ is the time-evolved configuration corresponding to a particular trajectory of the system, and the brackets denote averaging over different circuit realizations. We use $C(\vec{r},t) \equiv C_{\vec{r}}(\vec{r},t)$ to denote autocorrelations on the same site.

The expected generic late-time behavior for a system with the above multipole symmetries is a subdiffusive decay of correlations of the form $C(\vec{r},t)\sim t^{-1/2}$~\cite{gromov2020_fractonhydro,Iaconis_2021,Hart_2022}.
This behavior is indeed observed in our model when $S>2$ as Fig.~\ref{fig:fig2}a shows. However, Fig.~\ref{fig:fig2}a also shows that for $S=2$, bulk correlations saturate to a finite value that remains finite even in the thermodynamic limit (see App.~\ref{app:scaling} for a careful finite-size scaling analysis). We have also numerically confirmed that spatial correlations remain localized to a finite region. This is reminiscent of the behavior observed for $S=1$ in the one-dimensional version of the same model~\cite{Sala_PRX}, where it appeared as a result of Hilbert (or configuration in the classical model) space fragmentation. In Sec.~\ref{sec:localization} we argue that a similar explanation is valid here as well and it generalizes to a larger class of models, defined on arbitrary lattices and graphs. 

While bulk correlations decay to zero for any $S>2$ at infinite times, we find that correlations evaluated near the boundary of a system with open boundary conditions remain finite for \emph{any} $S$, as shown in Fig.~\ref{fig:fig2}b. In Sec.~\ref{sec:spat_mod_charges}, we will explain this by the presence of a set of additional conserved quantities localized at the edges of the system, similar to the behavior observed for certain 1D systems in Ref. \cite{Sala_21}. We construct a hierarchy of conserved quantities that are power-law localized towards the bulk and utilize Mazur's bound to show that these lead to the finite boundary correlations observed above. This is consistent with the fact that we do not only observe finite boundary auto-correlations, but that in fact, the spatial correlation along the boundary defined as ${C_\dr{boundary}(x,t)\equiv C_{\vec{0}}(\vec{r}=(x,y=0),t)}$ is also localized (see inset in Fig.~\ref{fig:fig2}b).

\section{Localization from fragmentation}\label{sec:localization}

In this section, we first define a class of models, which we name ``discrete Laplacian'' models, that generalize both the 1D model studies in Ref. \cite{Sala_PRX} and the 2D model discussed above in Sec.~\ref{sec:summary}. We then go on to prove that all such models exhibit localized dynamics when the number of local configurations is sufficiently restricted. 

\subsection{General discrete Laplacian models}\label{sec:symhopp}

The model \eqref{eq:Gate_tilt} introduced in the previous section can be thought of as a natural generalization of the 1D model studied in Refs. \cite{Sala_PRX,Rakovszky_slioms}. In fact, as we now show, we can generalize this model to an arbitrary lattice in any spatial dimension, or even for an arbitrary graph. Consider a graph defined by vertices $V$ and edges $E$. Let $z_v$ denote the degree (number of neighbors) of vertex $v$ and we assume that the degrees are bounded: $z_v \leq z_0,\, \forall v\in V$.  We assign a spin variable $s_v=-S_v,-S_v+1,\ldots,+S_v$ to each site, where we now allow the local spin to vary with the vertex $v$. In the proofs of Sec.~\ref{sec:proof_frag}, we will find it easier to work with a shifted variable, $m_v = s_v + S_v$, which therefore takes values $0,1,\ldots,M_v\equiv2S_v$. We will refer to this latter variable as the ``particle number'' on vertex $v$.

Let us generalize the model in Eq.~\eqref{eq:Gate_tilt} and define the following local gate acting on a vertex $v$ and its neighbors $\mathcal{N}_v$, specified by the integers
\begin{equation}\label{eq:general_gate}
    n_{v^{\prime}}=\begin{cases}
       -z_v &\text{if }v^{\prime}=v, \\[1ex]
       +1 &\text{if }v^{\prime}\in \mathcal{N}_v.
    \end{cases}
\end{equation}
The effect of this gate is to remove $z_v$ particles from $v$ and distribute them equally (one each) between its neighbors, leaving $v$ completely empty when $M_v=z_v$. This is illustrated in Fig.~\ref{fig:fig1} for a number of different graphs corresponding to 2D and 3D lattices. We also define the inverse gate, which takes one particle from each of the neighbors of $v$ and collects them at $v$. As before, we only allow these gates to act if it does not lead to a violation of the constraint $0 \leq m_v \leq M_v$ for any vertex $v$. We refer to the set of models built from these elementary updates as \emph{discrete Laplacian models}\footnote{We note that these are closely related to the problem of \emph{chip-firing} studied in the mathematical literature~\cite{klivans2018mathematics}. However, our models are different in that they allow for the aforementioned inverse gates and in that there is a local constraint $s_v \leq M_v$.}. 

While there are various ways in which these local updates rules can be turned into a stochastic evolution, the properties we discuss are largely independent of these details and follow directly from the constraints imposed by the structure of the gates (although we do assume detailed balance). We note that the discrete Laplacian updates are fairly natural to consider if one wants to impose a conservation of higher (in particular, dipole) moments while keeping interactions as local as possible.

\subsection{Proof of localization} \label{sec:proof_frag}

We now turn our attention to the case when the local constraints $M_v$ take on their minimal value, i.e., we set $M_v = z_v$ throughout this section. For the 2D model in Fig.~\ref{fig:fig1}a, we observed that in this case autocorrelations exhibit a finite saturation value, a signature of localization. We  now prove that this is a generic feature of the discrete Laplacian models with $M_v = z_v$. 

As discussed above, there are two elementary processes that can occur: a site can `fire', distributing a particle to each of its neighbors (i.e., $G(v)=-z_v$) or it can `anti-fire', receiving a particle from all neighbors simultaneously ($G(v)=+z_v$). As we emphasized, these updates are allowed only if they do not lead to a violation of the constraints $m_v \in[0,z_v]$. To specify the dynamics, we consider stochastic evolutions where these two types of local updates are applied according to some random rules, at integer times, defining a Markov process with transition matrix $T$. We assume that this Markov process is \emph{reversible}, $T_{\mathbf{m},\mathbf{m'}} = T_{\mathbf{m'},\mathbf{m}}$. The space of particle configurations splits into various connected components; due to the reversibility of the dynamics, for an initial condition belonging to connected component $\mathcal{C}$, there is a unique stationary distribution, which is uniform over all $\mathbf{m} \in \mathcal{C}$~\cite{levin2017markov}.

Our strategy to prove localization is as follows. We fix a vertex $v$ and we identify connected components such that $m_v$ takes the same value for all $\mathbf{m}\in\mathcal{C}$; in particular the values $m_v=0$ or $m_v=z_v$ (its minimum and maximum). Any initial configuration from such a $\mathcal{C}$ will lead to a finite contribution to the connected autocorrelation $C_v(t)$. What we then need to prove is that a finite fraction of all initial configurations belong to one of these connected components. 

In particular, let $D(\mathcal{C}) = |\mathcal{C}|$ denote the number of configurations belonging to $\mathcal{C}$ and $D_{m_v}(\mathcal{C})$ denote the number of those where the vertex $v$ is in the state $m_v$ (obviously, $\sum_{m_v} D_{m_v}(\mathcal{C}) = D(\mathcal{C})$). We can define the average value of $m_v$ within the sector $\mathcal{C}$ as $\overline{m}_v(\mathcal{C}) \equiv \sum_{m_v} {m_v} \frac{D_{m_v}(\mathcal{C})}{D(\mathcal{C})}$. 
Let us write the average over random trajectories as $\langle m_v(t) \rangle_{\mathbf{m}(0)} = \sum_{\mathbf{m}} p_{\mathbf{m}(0)}(\mathbf{m},t) m_v$, where $p_{\mathbf{m}(0)}(\mathbf{m},t) = \left(T^t\right)_{\mathbf{m},\mathbf{m}_0}$ is the probability distribution over possible spin configurations, conditioned on a given initial configuration $\mathbf{m}(0)$. 
Let us now focus on $\langle m_v(t) \rangle_{\mathbf{m}(0)}$ for a particular initial configuration belonging to a specific connected component $\mathcal{C}$. Restricted to this, the dynamics is by definition irreducible. Since we also assumed reversibility, this implies a unique stationary distribution, which is uniform over all $\mathbf{m} \in \mathcal{C}$~\cite{levin2017markov}. We therefore have 
	\begin{equation}
		\lim_{\tau\to\infty}\frac{1}{\tau} \sum_{t=0}^{\tau} p_{\mathbf{m}(0)}(\mathbf{m},t) = \frac{1}{D(\mathcal{C})}  \delta_{\mathcal{C}}(\mathbf{m}) \equiv p_{\mathcal{C}}(\mathbf{m}),
	\end{equation}
with $\delta_{\mathcal{C}}(\mathbf{m})=1$ if $\mathbf{m}\in \mathcal{C}$ and zero otherwise.
Using this, the infinite time-averaged expectation value is $\overline{\langle m_v(t) \rangle_{\mathbf{m}(0)}} = \overline{m}_{v}(\mathcal{C})$, which is independent of $\mathbf{m}(0)$ within the same $\mathcal{C}$. Plugging this back into Eq.~\eqref{eq:Csum} we get
\begin{equation}\label{eq:Cfin_main}
\overline{C_v} \equiv \lim_{\tau\to\infty} \frac{1}{\tau} \sum_{t=0}^{\tau} C_v(\tau) =  \sum_{\mathcal{C}}\frac{D(\mathcal{C})}{D} (\overline{m}_v(\mathcal{C}) - \overline{m}_v)^2,
\end{equation}
where $D=\sum_\mathcal{C}D(\mathcal{C}) = \prod_v (z_v+1)$ is the total number of configurations and $\overline{m}_v = z_v/2$ is the overall (``infinite temperature'') average particle number\footnote{Note that the bound itself is also insensitive to the shift of variables from $s_v$ to $m_v$.}. The key point about this formula is that every term in the sum on the right hand side is non-negative. Therefore calculating the contribution from any subset of the terms in the sum of Eq.~\eqref{eq:Cfin_main} provides a strict lower bound for the time-averaged autocorrelations.

We will construct the appropriate set of connected components by identifying local spin configurations where certain spins remain frozen in the original state at all times, independently of the configuration in the rest of the system. The same approach can be used to demonstrate strong fragmentation in kinetically constrained spin systems, such as certain variants of the Fredrickson-Andersen model~\cite{Ritort03}. However, in our case, proving the existence of such frozen blocks is considerably more involved, and it will take up the rest of this section. We begin by proving a useful lemma:

\begin{lemma}\label{lem1}
Assume a vertex $v$ fires twice at two different (discrete) times $t'-1$ and $t+1$ ($t > t'$) such that it does not anti-fire (on net) between them. Then all neighbors of $v$ must have fired at least once sometime in the time window $[t',t]$. 
\end{lemma}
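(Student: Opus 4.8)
The plan is to translate the hypothesis into a statement about how the occupation $m_v$ must evolve across the window, and then to show that the capacity constraints make it impossible to refill $v$ unless every neighbor participates. First I would fix the timing convention: immediately after the firing at $t'-1$ one has $m_v=0$, and immediately before the firing at $t+1$ one has $m_v=z_v$, so $m_v$ must climb from empty to full over $[t',t]$. The only processes that change $m_v$ are $v$'s own updates and the updates of its neighbors (each firing of a neighbor adds one unit to $m_v$, each anti-firing removes one). Since $v$ does not anti-fire on net, its own contribution over the window is non-positive, so the neighbors must collectively deliver at least $z_v$ units: with $f_{v'}$ and $g_{v'}$ counting the firings and anti-firings of a neighbor $v'$, one gets $\sum_{v'\in\mathcal{N}_v}(f_{v'}-g_{v'})\ge z_v$. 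Writing the net firing numbers as a vector $x$, this is the statement $(Lx)_v=-z_v$ for the graph Laplacian $L$, supplemented by $(Lx)_u=-\Delta m_u\in[-z_u,z_u]$ at every vertex $u$, since $0\le m_u\le z_u$.

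The point I would stress next is that this charge balance is necessary but \emph{not} sufficient: there exist net firing vectors in which one neighbor of $v$ never fires while all of the above (in)equalities still hold, so the conclusion cannot follow from charge counting alone. What must additionally be used is the update rule itself --- a site can fire only when it is full and each of its neighbors still has room to receive --- together with the time-ordering of the updates. To organize the argument I would first treat the clean case in which $v$ neither fires nor anti-fires inside the window. Then $m_v$ increases monotonically from $0$ to $z_v$ through exactly $z_v$ neighbor-firing events, and it suffices to prove that these events are caused by $z_v$ \emph{distinct} neighbors, i.e. that no neighbor fires twice.

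To prove that no neighbor fires twice I would argue by contradiction along the causal chain forced by the capacity constraint. If a neighbor $v'$ fires at two times in the window, then between them it must return from empty to full, i.e. receive $z_{v'}$ units; since $v$ never fires in the window it sends nothing across the edge $(v,v')$, so all $z_{v'}$ units must arrive through $v'$'s other $z_{v'}-1$ edges, which forces repeated firings of vertices further from $v$, and so on outward. I would then show that this escalating demand cannot be met. The vertices that must fire the largest number of times are, at the start of the window, pinned as close as possible to their capacity (their occupations are forced toward the saturated value by the boundary data $m_v(t')=0$, $m_v(t)=z_v$), and a maximizer of $x$ is adjacent to other maximizers; but a saturated vertex cannot be the first to fire while a neighboring vertex is also saturated. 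This deadlock contradicts the assumed legality of the trajectory, so each of the $z_v$ neighbor-firings must come from a different neighbor.

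The hard part will be exactly this last step: turning the recursive ``who refills whom'' demand into a rigorous proof that no legal temporal ordering exists, on a general graph rather than a chain, where the mutually blocking saturated cluster is harder to locate than the adjacent saturated pair one meets in one dimension. I expect the right tool to be a discrete maximum principle for the net firing vector $x$ --- using $(Lx)_v=-z_v$ at the minimum together with $(Lx)_u\ge -z_u$ everywhere --- combined with the pinning of extremal vertices to their capacity bounds at the two ends of the window, showing that the firing front has nowhere to start. Finally I would remove the simplifying assumptions: allowing $v$ itself to fire or anti-fire subject only to ``no net anti-firing'' is consistent with $x_v\ge 0$ and only enlarges the deficit the neighbors must cover, and the monotonicity invoked above can be replaced by tracking the first and last times at which $m_v$ equals $0$ and $z_v$ and restricting attention to the sub-window on which $m_v$ stays strictly below $z_v$.
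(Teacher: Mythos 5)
Your opening matches the paper's proof: the charge balance $\Delta m_v = \sum_{v'\in\mathcal{N}_v}F_{v'} = z_v$ over the window, the observation that if one neighbor contributes nothing then another must fire at least twice, and the resulting outward-propagating ``who refills whom'' demand are all exactly the paper's Lemma~\ref{lem1} argument. The gap is in how you close the recursion. The paper's termination step is purely temporal and elementary: the second firing of $v_1$ at times $t_1'<t_1$ forces some further vertex $v_2$ to fire twice at times \emph{strictly inside} $(t_1',t_1)$, and so on, producing an infinite strictly nested sequence of nonempty discrete time intervals inside $[t',t]$ --- impossible after at most $t-t'$ steps. No maximum principle, no statement about occupations away from $v$, and no existence of a saturated cluster is needed.

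Your proposed replacement for this step does not work as described. You yourself correctly note earlier that the net-firing vector $x$ alone cannot imply the conclusion, since there are admissible solutions of the Laplacian balance equations with a non-firing neighbor; but a ``discrete maximum principle for $x$'' is precisely a statement about that vector and so cannot, by your own observation, supply the missing ingredient. The temporal input you try to add --- that the vertices with largest $x$ are ``pinned'' near saturation at the ends of the window and that adjacent maximizers deadlock --- is unjustified: the hypothesis only fixes $m_v$ at the single vertex $v$ at the two endpoints, and says nothing about the occupations of vertices two or more steps away at time $t'$ or $t$, so the mutually blocking saturated cluster is never shown to exist. (The adjacent-saturated-pair mechanism you have in mind is the engine of Corollary~\ref{cor1}-type frozen regions, not of this lemma.) The fix is to keep your recursion but track the \emph{times}: at each stage the doubly-firing vertex's two firings are separated by a recharge that must be delivered strictly between them, forcing the next doubly-firing vertex's pair of firings into a strictly shorter subinterval, which terminates the regress on any bounded-degree graph without any spatial or occupation-based argument.
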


\begin{proof}

We want to keep track of how many times each site fires/anti-fires as the dynamics progresses. Let us define the \emph{net number} of firings at a vertex $F_v(t,t')$ as the number of times $v$ fires in a time interval $[t',t]$ \emph{minus} the number of times it anti-fires in the same time interval. 

In the situation above we have $F_v(t+1,t'-1) = +2$ and $F_v(\tau,\tau') = 0$  for any $t' \leq \tau' < \tau \leq t$. In particular, $F_v(t,t') = 0$. Now we assume that $v$ has some neighbor $v'$ that does \emph{not} fire in this time interval (more precisely, we only need the weaker condition that it does not fire \emph{on net}, i.e., $F_{v'}(t,t') \leq 0$.) and derive a contradiction. 

Note that between the two firings, $v$ has to `re-charge', $\Delta m_v(t,t') \equiv m_v(t) - m_v(t') = z_v$. On the other hand, we can rewrite this as the total charge flowing into site $v$: 
\begin{equation}
    \Delta m_v(t,t') =  \sum_{v' \in \mathcal{N}_v } F_{v'}(t,t') - z_vF_v(t,t') = \sum_{v' \in \mathcal{N}_v} F_{v'}(t,t') = z_v.
\end{equation}
That is, the charge needed for the second firing has to come from the firing of neighboring sites. If we assume that one of the neighbors does not fire, then another one has to fire at least twice: $F_{v_1}(t,t') \geq 2$ for some $v_1 \in \mathcal{N}_v$. 

Now we can run this argument recursively. Let $t_1$ be the time when $v_1$ fires for the second time and $t_1'$ the last time it fired before that. Between these two times, $v_1$ needs to re-charge, which it can only do if its neighbors fire. However, since $t' < t_1' < t_1 < t$ we have from our previous assumption that $F_v(t_1,t_1') = 0$. Therefore $v_1$ needs to have some \emph{different} neighbor, $v_2$ that fires at least twice at some times $t_2'$ and $t_2$ between $t_1'$ and $t_1$ and so forth. We thus end up with an infinite regress: before our original site could fire for the second time, one of its neighbors has to fire twice, but then one of \emph{its} neighbors has to fire twice etc, making this process impossible (for a sketch of this on the square lattice, see Fig.~\ref{fig:block}a). The only resolution is to allow \emph{all} neighbors of $v$ to fire at least once between $t'$ and $t$.

\end{proof}

\begin{figure}
    \centering
    \includegraphics[width=\linewidth]{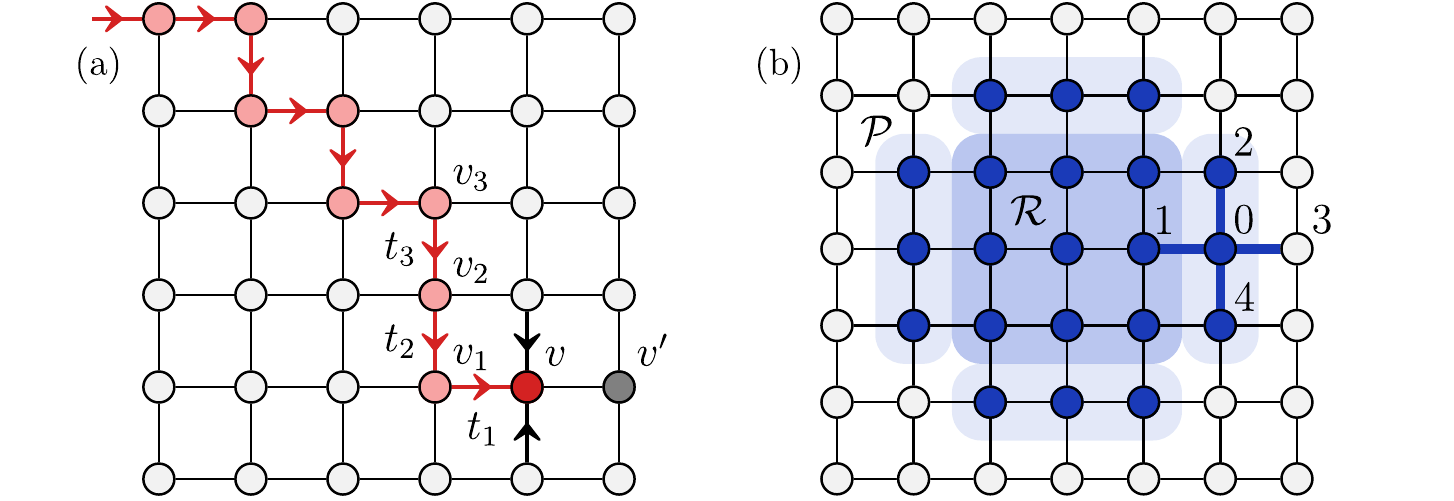}
    \caption{\textbf{Proof of localization.} (a) Sketch of the argument leading to lemma~\ref{lem1}. We want site $0$ (dark red) to fire twice, while not allowing its neighbor to the right (gray) to fire in-between. This requires another neighbor (site $1$) to fire twice, leading to an infinite regress shown by the red arrows. (b) Construction of a cluster of frozen sites. Blue sites are taken to have zero particles $m_v=0$. The light blue sites on the outside layer can change but the inner ones remain frozen forever, independently of the initial configuration on all other sites.}
    \label{fig:block}
\end{figure}

We can then use this property to construct frozen blocks of sites that lead to fragmentation and localized dynamics. In particular, we have

\begin{cor}\label{cor1}
Take some set of sites $\mathcal{R}$ with the property that for any $v \in \mathcal{R}$ there is some neighbor $v' \in \mathcal{N}_v$ that is also in $\mathcal{R}$. Let $\overline{\mathcal{R}}$ denote the set of vertices in $\mathcal{R}$ along with all their neighbors, $\overline{\mathcal{R}} \equiv \mathcal{R}\cup( \bigcup_{v\in\mathcal{R}} \mathcal{N}_v)$. Consider a configuration where all sites in $\overline{\mathcal{R}}$ have $m_v=0$. Then the sites in $\mathcal{R}$ will remain frozen forever, independently of what the initial configuration is outside of $\overline{\mathcal{R}}$. 
\end{cor}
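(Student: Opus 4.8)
The plan is to reduce the whole statement to a single claim: \emph{no vertex of $\overline{\mathcal{R}}$ ever fires}. Granting this, the freezing of $\mathcal{R}$ is almost immediate. A vertex $v\in\mathcal{R}$ starts at $m_v=0$ and all of its neighbors lie in $\overline{\mathcal{R}}$ (by definition of $\overline{\mathcal{R}}$), so the only ways $m_v$ could change are (i) receiving a particle from a neighbor that fires, (ii) firing itself, or (iii) anti-firing. Option (i) is excluded by the claim; option (ii) is impossible from $m_v=0$ since firing needs $m_v=z_v\ge 1$; and option (iii) is ruled out because, by hypothesis, $v$ has a neighbor $w\in\mathcal{R}$ that is itself frozen at $0$, so the precondition ``all neighbors $\ge 1$'' for anti-firing can never be met. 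Before anything else I would record the structural fact that makes this work: since $\overline{\mathcal{R}}=\bigcup_{v\in\mathcal{R}}\mathcal{N}_v$, \emph{every} vertex of $\overline{\mathcal{R}}$—including those of the outer layer $\overline{\mathcal{R}}\setminus\mathcal{R}$—is adjacent to some vertex of $\mathcal{R}$.

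To prove the claim I would argue by examining the first ``forbidden'' event in time. Concretely, I would carry along the invariant that at all times every $v\in\mathcal{R}$ has $m_v=0$ and no vertex of $\overline{\mathcal{R}}$ has ever fired or anti-fired, and suppose this first fails at some update. The two easy cases are dispatched exactly as above: an anti-firing of any $u\in\overline{\mathcal{R}}$ is impossible because $u$ has a neighbor in $\mathcal{R}$ still sitting at $0$; and an $\mathcal{R}$-site can only leave $0$ by receiving a particle, which would require one of its ($\overline{\mathcal{R}}$) neighbors to have fired, contrary to the invariant holding up to that moment. (The non-overlapping-gate structure of a single time layer guarantees that a neighbor's value is not altered in the same layer, so these instantaneous checks are legitimate.)

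The substantive case—and the main obstacle—is the first firing of a vertex $v\in\overline{\mathcal{R}}$. Since $\mathcal{R}$-sites are still frozen at $0$, $v$ cannot lie in $\mathcal{R}$, so $v$ is an outer-layer vertex with at least one neighbor in $\mathcal{R}$, hence at most $z_v-1$ neighbors outside $\overline{\mathcal{R}}$. Using the net-firing bookkeeping from Lemma~\ref{lem1}, the charge accumulated on $v$ to bring it from $0$ up to $m_v=z_v$ equals $\sum_{v'\in\mathcal{N}_v}F_{v'}=z_v$ (with $F_v=0$, as $v$ has not yet fired); because every $\overline{\mathcal{R}}$-neighbor of $v$ has $F_{v'}=0$ under the invariant, all $z_v$ units must have flowed in through the $\le z_v-1$ neighbors outside $\overline{\mathcal{R}}$. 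A pigeonhole then forces some such neighbor $v_1$ to have fired on net at least twice. At this point I would invoke Lemma~\ref{lem1}: choosing the two fires of $v_1$ that bracket the moment its running net firing count climbs from $1$ to $2$ (so there is no net anti-firing between them), the lemma forces \emph{all} neighbors of $v_1$, and in particular $v$, to fire in the intervening window—i.e.\ $v$ fires strictly before its supposed first firing, the desired contradiction.

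I expect the remaining difficulties to be bookkeeping rather than conceptual: making the ``first event'' well defined within the synchronous layered update (and confirming that simultaneity never lets an $\mathcal{R}$-neighbor jump to a nonzero value in the very same layer), and translating the crude bound $F_{v_1}\ge 2$ into the precise two-firings-with-no-net-anti-firing hypothesis that Lemma~\ref{lem1} demands. Once the claim is established, the three-case freezing argument of the first paragraph, now run for all times, shows that the sites of $\mathcal{R}$ never leave $m_v=0$, completing the proof; Fig.~\ref{fig:block}b illustrates the resulting configuration.
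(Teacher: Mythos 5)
Your proof is correct and follows essentially the same route as the paper's: a first-bad-event contradiction in which anti-firing is blocked by a neighbor frozen at zero, recharging the first firing site in $\overline{\mathcal{R}}\setminus\mathcal{R}$ forces, by pigeonhole over its at most $z_v-1$ available neighbors, some neighbor to fire twice on net, and Lemma~\ref{lem1} then yields the contradiction. Your version is slightly more careful than the paper's (the explicit invariant and the selection of two firings satisfying the no-net-anti-firing hypothesis of Lemma~\ref{lem1}), but the argument is the same.
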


\begin{proof}

An example of this situation for a square lattice is shown in Fig.~\ref{fig:block}b for the 2D square lattice, with dark blue sites belonging to $\mathcal{R}$ and light blue ones to the ``padding region'' $\mathcal{P} \equiv \overline{\mathcal{R}} \setminus \mathcal{R}$. We again use a proof by contradiction. Let us assume that at least one of the sites in $\mathcal{R}$ can change. Let us denote by $v$ the first one to do so (site labeled `1' in Fig.~\ref{fig:block}b). The only way for $v$ to change is if one of its neighbors in $\mathcal{P}$ fires. Let us denote this site by $v' \in \mathcal{N}_v \cap \mathcal{P}$ (site labeled `0' in the figure). 

Since $v'$ starts in its lowest state, it must have been completely `charged up' (i.e., $m_{v'} = z_{v'}$) beforehand. Site $v'$ could not have anti-fired up to this point, since it has a neighbor (site $v$) which has no particles. Therefore the charge must have come from the firing of its other neighbors. Since $v$ is frozen, only the remaining $z_{v'}-1$ neighbors (sites $2,3,4$ in the figure) have contributed to charge $v$ up. Hence, one of them needed to fire at least twice. However, due to the previous theorem, this is in contradiction with the fact that $v'$ did not yet fired, which finishes the proof.

\end{proof}

We can now combine this corollary with Eq.~\eqref{eq:Cfin_main} to arrive at

\begin{theorem}\label{thm:main}
The time-averaged ``infinite temperature'' auto-correlations are finite in the thermodynamic limit. In particular $\overline{C_v} \geq (z_v/2)^2 (z_0+1)^{-2z_0}$.
\end{theorem}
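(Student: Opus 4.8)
The plan is to feed the frozen-block construction of Corollary~\ref{cor1} into the variance formula Eq.~\eqref{eq:Cfin_main}, exploiting the fact that every summand there is non-negative, so that restricting the sum to a cleverly chosen subfamily of connected components already produces a valid lower bound. Concretely, I will exhibit an explicit, extensively large set of configurations in which the vertex $v$ is frozen at $m_v=0$, count them, and plug the resulting fraction into Eq.~\eqref{eq:Cfin_main}.

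First I would produce a single frozen vertex. Fix $v$ and pick any neighbor $v'\in\mathcal{N}_v$, and set $\mathcal{R}=\{v,v'\}$. This set satisfies the hypothesis of Corollary~\ref{cor1}, since $v$ has the neighbor $v'\in\mathcal{R}$ and $v'$ has the neighbor $v\in\mathcal{R}$. Its closed neighborhood $\overline{\mathcal{R}}=\mathcal{N}_v\cup\mathcal{N}_{v'}$ then contains at most $z_v+z_{v'}\le 2z_0$ vertices. Let $\mathcal{S}$ be the set of all global configurations in which every vertex of $\overline{\mathcal{R}}$ carries $m_w=0$, the remaining vertices being arbitrary. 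By Corollary~\ref{cor1}, any $\mathbf{m}\in\mathcal{S}$ lies in a connected component $\mathcal{C}$ throughout which $v$ stays frozen at $m_v=0$; hence \emph{every} configuration of such a $\mathcal{C}$ has $m_v=0$, so $\overline{m}_v(\mathcal{C})=0$ and the associated squared deviation equals $(0-\overline{m}_v)^2=(z_v/2)^2$.

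The rest is counting. The number of frozen configurations is $|\mathcal{S}|=\prod_{w\notin\overline{\mathcal{R}}}(z_w+1)=D\big/\prod_{w\in\overline{\mathcal{R}}}(z_w+1)$, and since $|\overline{\mathcal{R}}|\le 2z_0$ with each factor $z_w+1\le z_0+1$, this gives $|\mathcal{S}|\ge D\,(z_0+1)^{-2z_0}$. Letting $\{\mathcal{C}_i\}$ be the distinct components met by $\mathcal{S}$, which are disjoint and cover $\mathcal{S}$, we have $\sum_i D(\mathcal{C}_i)\ge\sum_i|\mathcal{C}_i\cap\mathcal{S}|=|\mathcal{S}|$. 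Keeping only these components in Eq.~\eqref{eq:Cfin_main} and inserting $\overline{m}_v(\mathcal{C}_i)=0$,
\[
\overline{C_v}\ \ge\ \sum_i \frac{D(\mathcal{C}_i)}{D}\Big(\frac{z_v}{2}\Big)^2\ \ge\ \Big(\frac{z_v}{2}\Big)^2\frac{|\mathcal{S}|}{D}\ \ge\ \Big(\frac{z_v}{2}\Big)^2 (z_0+1)^{-2z_0},
\]
which is exactly the claimed estimate; the thermodynamic-limit statement is immediate because the right-hand side is a positive constant independent of the system size.

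The only genuinely delicate point is the bookkeeping relating the explicit set $\mathcal{S}$ to the components $\mathcal{C}_i$ appearing in Eq.~\eqref{eq:Cfin_main}: one must verify that freezing $v$ forces $\overline{m}_v(\mathcal{C}_i)=0$ for the \emph{whole} component rather than merely for the seed configuration, and that $\sum_i D(\mathcal{C}_i)/D$ is bounded below by $|\mathcal{S}|/D$ even though the padding sites of $\mathcal{P}=\overline{\mathcal{R}}\setminus\mathcal{R}$ are free to move. Everything else is elementary. I note in passing that the mirror-image construction, freezing $v$ at $m_v=z_v$ by taking all of $\overline{\mathcal{R}}$ maximal, supplies a disjoint family of components and would improve the prefactor by a factor of two, though this is not needed for the stated bound.
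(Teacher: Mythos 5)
Your proposal is correct and follows essentially the same route as the paper's own proof: take $\mathcal{R}=\{v,v'\}$, freeze the closed neighborhood $\overline{\mathcal{R}}$ at $m=0$, invoke Corollary~\ref{cor1} to conclude $\overline{m}_v(\mathcal{C})=0$ on every component met by these seeds, and count $|\mathcal{S}|\ge D(z_0+1)^{-2z_0}$ to extract the bound from Eq.~\eqref{eq:Cfin_main}. The only additions beyond the paper's argument are the explicit bookkeeping of components versus seed configurations and the remark about the mirror construction at $m_v=z_v$, both of which are consistent with the paper (the latter appears there as a footnote).
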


\begin{proof}

Let us take $\mathcal{R}$ to be a set of two sites, including $v$ and one of its neighbors and consider the set of configurations such that all sites $v' \in \overline{\mathcal{R}}$ have $m_{v'}=0$. The number of such configurations is $D/\left(\prod_{v' \in \overline{\mathcal{R}}} (z_{v'}+1)\right) \geq D/(z_0+1)^{2z_0}$, where we used that $\overline{\mathcal{R}}$ contains at most $2z_0$ sites. Let us now also include all other configurations that are connected to one of these by the dynamics, forming a set of connected components $\mathcal{C}$. For all such $\mathcal{C}$, it follows from the corollary that $\overline{m}_v(\mathcal{C}) = 0$, since $v$ is frozen. Therefore the total contribution from these to the RHS of Eq.~\eqref{eq:Cfin_main} is $(z_v/2)^2 \left(\sum_{\mathcal{C}}' D(\mathcal{C})\right) / D \geq
(z_v/2)^2 / (z_0+1)^{2z_0}$. We hence conclude that the time-averaged autocorrelations must remain finite. In particular, for the square lattice model in Eq.~\ref{eq:Gate_tilt} which has $z_v=4 \, \forall v$, we find $\overline{C_v}\geq 4/5^8$.
\end{proof}

Theorem~\ref{thm:main} is the main result of this section. It shows that the dynamics is localized, in the sense that a finite amount of local memory of initial conditions is preserved forever. 

\subsection{Strong fragmentation of the configuration space}

\begin{figure}
	\centering
	\includegraphics[width=\linewidth]{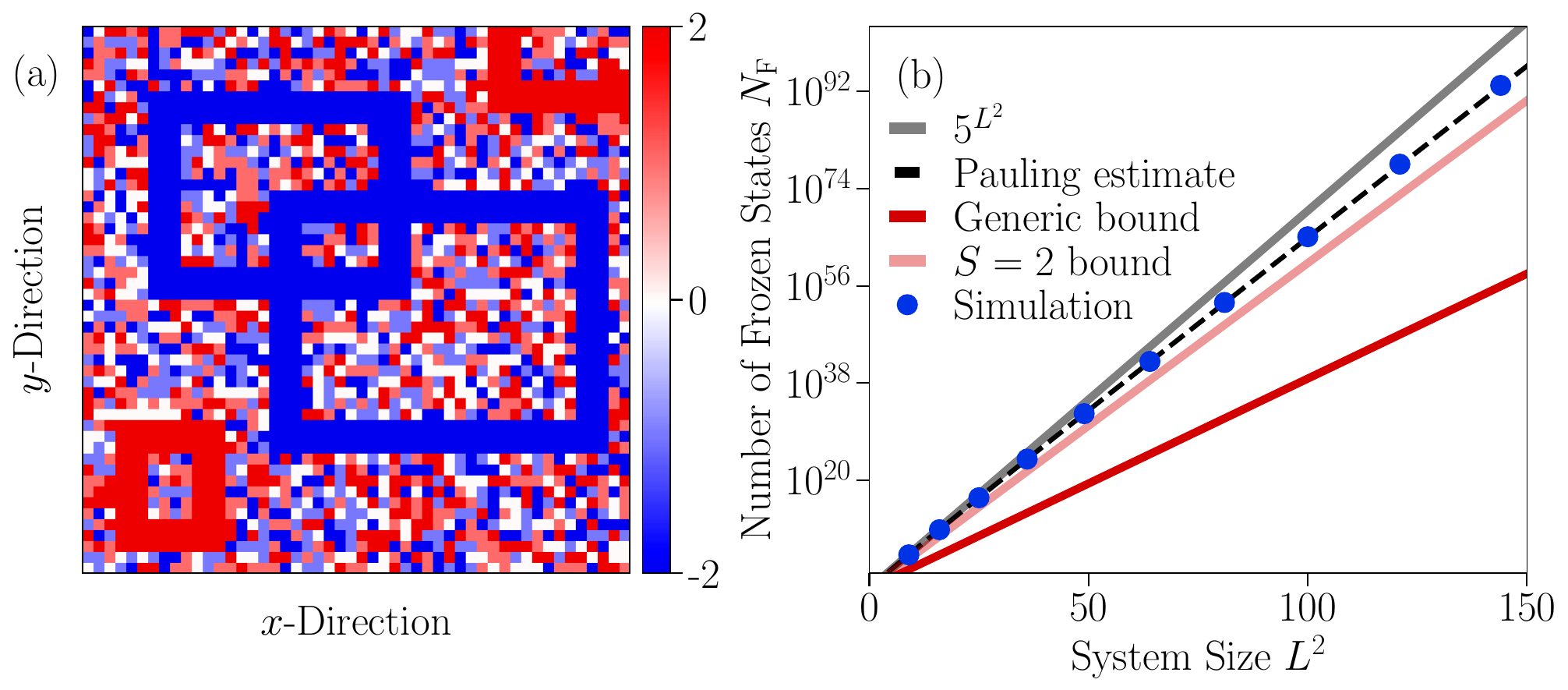}
	\caption{\textbf{Fragmentation.} (a): Construction of various sectors in configuration space using corollary~\ref{cor1}, with inert walls splitting the lattice into disconnected regions. (b) Estimates for the number of completely frozen configurations, along with a Pauling estimate (see main text). Numerical simulations use system sizes with $L\leq 12.$}
	\label{fig:frag}
\end{figure}

Let us now discuss what our results imply about the fragmented structure of the space of particle number configurations, i.e. the structure of the connected components $\mathcal{C}$. For simplicity, we focus on a case where the system is defined on a $d$-dimensional lattice, so there is a notion of a linear size $L$, with the number of sites growing as $\sim L^d$. We also take periodic boundary conditions, so that there are no additional global symmetries beyond the multipole moments mentioned above.  

Ref. \cite{Sala_PRX} distinguished two types of fragmentation, labeled \emph{weak} and \emph{strong}. For the former, once the values of global conserved quantities (in our case, charge and its various multipole moments) are fixed, there is a dominant $\mathcal{C}$ that contains almost all configurations in the thermodynamic limit. In particular, this means that the size of the largest connected component has a size $D/\operatorname{poly}(L)$. For strong fragmentation, on the other hand, the dimension of the largest $\mathcal{C}$ is an exponentially small fraction of $D$. The one-dimensional version of the discrete Laplacian model was found to exhibit strong fragmentation for $S=1$~\cite{Sala_PRX}, and the localization of autocorrelations was derived explicitly from a complete classification of the fragmented components in Ref. \cite{Rakovszky_slioms}.

Here, we proved localization more directly, without having to construct the full set of connected components. Nevertheless, our results do imply strong fragmentation. What we have proven above is that the system exhibits a finite \emph{frozen site density}~\cite{morningstar2020_kinetic}: any given site has a non-vanishing probability of having its local state preserved by the dynamics at all times. It follows from this that the probability that a randomly chosen configuration contains a finite fraction of such frozen sites, approaches $1$ in the limit of large system sizes.
The presence of a finite fraction of frozen sites means that the number of configurations connected to this initial state is exponentially small compared to $D$. This implies strong fragmentation. More previsely, as we prove in the following

\begin{theorem}\label{thm:main2} Consider a generalized discrete Laplacian model defined on a $d$-dimensional hypercubic lattice of linear size $L$, such that each site has $M=z+1$ possible states, with $z$ the coordination number of the lattice. Then the space of particle number configurations is strongly fragmented in the sense that the size of the largest connected component divided by $D = M^{L^d}$ goes to zero exponentially with $L^d$.
\end{theorem}

\begin{proof} 
If a configuration has $n$ frozen sites, i.e. sites whose state cannot evolve, then it is connected at most to $M^{L^d-n}$ other configurations. This is exponentially small compared to $D$ if $n/L^d$ is finite in the limit of large $L$. Therefore, any putative large component that is \emph{not} exponentially small must contain only a vanishing fraction of frozen sites. If we can show that the total number of configurations where $n$ does not grow proportionally to $L^d$ is itself exponentially small compared to $D$ then we have ruled out the possibility of a large (not exponentially small) connected component.

Let us partition the lattice into regions (``boxes'') of linear size $\ell$. The number of such boxes is $N_\ell = (L/\ell)^d$. 
This means that each box has $D_\ell = M^{\ell^d}$ configurations, while the whole system has $D = M^{L^d} = D_\ell^{N_\ell}$. 
Corollary~\ref{cor1} shows that we can construct finite-sized patterns of maximally polarized spins where some sites (those on the ``inside'', i.e., the region $\mathcal{R}$ in Fig.~\ref{fig:block}) are frozen independently of what the configuration is outside of the pattern. If we choose $\ell$ to be large enough (i.e., $\ell \geq 4$ for a hypercubic lattice), then each box can contain such a frozen pattern. This will happen with a finite probability $p_\ell$ that depends on $\ell$ but not on the overall system size $L$. In particular, let $F_\ell$ be the number of configurations of a box that contain a frozen pattern, then $p_\ell = F_\ell / D_\ell$. Let us denote by $G_\ell = D_\ell - F_\ell$ the number of remaining configurations and $q_\ell = G_\ell/D_\ell < 1$ their probability.

We now ask the question: what is the probability that a randomly chosen initial spin configuration contains at most $n$ frozen sites? Let us denote this quantity $P_{\leq n}$. We want to show that $P_{\leq n}$ will be exponentially small in the large $L$ limit, unless $n$ is a finite fraction of $L^d$ (in particular, it should be at least $p_\ell L^d$). To do so, we note that we can upper bound it in the following way. Let us choose $m$ boxes which contain a frozen pattern, while the remaining $N_\ell-m$ boxes have no frozen patterns. Clearly, any such state has at least $m$ frozen sites, so the only contributions to $P_{\leq n}$ come from $m \leq n$. This gives the following upper bound:
\begin{equation}
	P_{\leq n} \leq  \sum_{m=0}^{n} \binom{N_\ell}{m} \frac{F_\ell^{m} G_{\ell}^{N_\ell-m}}{D_\ell^{N_\ell}}   = \sum_{m=0}^{n} \binom{N_\ell}{m} p_\ell^{m} q_{\ell}^{N_\ell-m}.
\end{equation}
The expression on the right hand side is the cumulative distribution function of the binomial distribution. At large $N_\ell$, the binomial distribution is increasingly sharply peaked around its mean value, $\bar{n} = p_\ell N_\ell$, and the cumulative probability at values of $n$ much smaller than $\bar{n}$ (i.e., outside a window of size $\propto N_\ell^{1/2}$) will be exponentially suppressed in $N_\ell$. In particular, the probability of configurations where the number of frozen sites is not a finite fraction of $L^d$ goes to zero exponentially in the thermodynamic limit. 
\end{proof}

In fact, there are many more connected components beyond those that we used in the proof of theorem~\ref{thm:main}, some of which we can also construct by using Corollary~\ref{cor1}. For example, we can take the region $\mathcal{R}$ to be a closed surface, meaning that it separates the graph into two regions -- an ``inside'' and an ``outside'' -- such that any path connecting them must go through $\mathcal{R}$. Then the two sides of $\mathcal{R}$ will remain disconnected by the dynamics (an example of this behavior is shown in Fig.~\ref{fig:frag}a). The region $\mathcal{R}$ in this case plays the same role as the ``inert regions'' discussed for the 1D model in Refs. \cite{Sala_PRX,khemani_localization_2020}. One could also put any frozen configuration in the inside of $\mathcal{R}$ to get some large frozen island, which would give additional contributions to the autocorrelations as well\footnote{Note also that we could re-run corollary~\ref{cor1} for a case where we fix the sites in $\mathcal{R}$ to have maximal values, $m_v = z_v$, instead of $m_v=0$. This produces a different set of connected components and corresponding contributions to the autocorrelation.}.

Another quantity related to fragmentation is the total number of \emph{frozen states}, i.e., spin configurations where every site is frozen. Numerically, we can estimate their number by randomly sampling configurations and checking if they are frozen: the results for the 2D model~\eqref{eq:Gate_tilt} with $S=2$ are shown by the dots in Fig.~\ref{fig:frag}b. We can also give an analytical estimate using Pauling's method, which yields that the number of frozen states, $N_\text{F}$, scales as $\ln(N_\text{F}) \approx \ln(D) + (L-2)^2 \ln\left(1-\frac{2^9}{5^5}\right)$ (see additional details in App.~\ref{app:frozen}). While this estimate is non-rigorous, it fits the numerical results very well (see black dashed line in Fig.~\ref{fig:frag}b). For the particular choice $S=2$, we can bound $N_\dr{F}$ by noting that only states with highest or lowest possible spins can evolve in time, finding the lower bound $3^{L^2}$ (pale red line in Fig.~\ref{fig:frag}b). Nevertheless, we can also derive a less tight but rigorous lower bound which holds for all half-integer $S$.  Let us consider a tiling of the 2D lattice with some unit cell and fix the configuration of spins on some (proper) subset of sites in each unit cell in such a way that these prevent all sites from firing (or anti-firing), no matter what configuration we put on the remaining sites. If the fraction of fixed sites is $\phi$, it directly follows that $N_\dr{F}>(2S+1)^{(1-\phi)L^2}$. There are several possible tilings that yield lattices that remain frozen at all times. Fig.~\ref{fig:Tiling} in App.~\ref{app:frozen} shows one particular example, using a $3\times3$ unit cell with four fixed sites. The finite fraction is therefore $\phi=4/9$ and the lower bound becomes $N_\dr{F}>(2S+1)^{\frac{5}{9}L^2}$. This is sufficient to prove that the number of frozen states scales exponentially with $L^2$ (see red line in Fig.~\ref{fig:frag}b for $S=2$) for any $S$, even away from the limit of $S=2$ where our proof of strong fragmentation applies. 

\section{Spatially modulated charges and boundary correlations}\label{sec:spat_mod_charges}

As observed in Fig.~\ref{fig:fig2}a, autocorrelations in the 2D model defined in Eq.~\eqref{eq:Gate_tilt} decay to zero in the bulk for any spin $S>2$. However, as panel (b) of the same figure shows, this is not true for correlations near the boundary, when the system is defined with open boundary conditions. Here, we explain this fact in terms of additional conserved quantities that the model possesses in this case.

\subsection{Recursion relation: Discrete Laplace equation}\label{sec:recursion}

In general, we can look for spatially modulated conserved quantities of the form $\mathcal{Q}_{\{\alpha_{\vec{r}}\}}=\sum_{\vec{r}} \alpha_{\vec{r}} s_{\vec{r}}$. For the model in Eq.~\eqref{eq:Gate_tilt}, this ansatz leads to the following recurrence relation, whose solutions correspond to conserved quantities:
\begin{equation} \label{eq:2drec}
4\alpha_{i,j}-\alpha_{i+1,j}-\alpha_{i-1,j}-\alpha_{i,j+1}-\alpha_{i,j-1}=0.
\end{equation}
With open boundaries, one can always solve the recurrence equation by specifying some set of boundary values $\alpha^\dr{B}_{i,j}$ and then propagating them to the bulk. Such solutions give rise to a large number of additional conservation laws. As we show below, these are localized near the boundary and lead to the aforementioned long-lived correlations there, while their effect on bulk dynamics is negligible in the thermodynamic limit.

To prove that this is the case, we need to solve Eq.~\eqref{eq:2drec} for specified boundary values of $\alpha$'s for $(i,j)\in B$, i.e.,  $\alpha_{0,j},\alpha_{L+1,j},\alpha_{i,0},\alpha_{i,L+1}$, where we distinguish between interior sites $i,j\in \{1,\dots,L\}$, that belong to the bulk of the system $D$, and those at the boundary $B$. To make the solution of the recurrence relation more apparent, we rewrite the equation in a slightly different form:
\begin{equation} \label{eq:Direchlet}
	\begin{cases}
		\alpha_{i,j}=\frac{1}{4}\left(\alpha_{i+1,j}+\alpha_{i-1,j}+\alpha_{i,j+1}+\alpha_{i,j-1} \right) \,\textrm{ for }\,  (i,j)\in D \\[1ex]
		\alpha_{i,j}=\alpha^\dr{B}_{i,j} \,\textrm{ for }\, (i,j)\in B
	\end{cases}
\end{equation}

Hence we see that the value at site $(i,j)$ in the bulk, is given by the average value of the four neighboring sites. This equation is a discrete version of the Laplace equation $((\partial_x^2 +  \partial_y^2) \alpha(x,y)=0)$ with Dirichlet boundary conditions for a square tiling. Its solutions are known as \emph{discrete harmonics} (see e.g., Ref.~\cite{lyons_peres_2017}). There are two important properties of (discrete) harmonic functions that we will use:
\begin{enumerate}
	\item A (discrete) harmonic function defined on $\mathcal{L}$ takes its maximum $M$ and minimum $m$ values at the boundary $B$. E.g., if $\alpha^\dr{B}_{i,j}\in \{0,1\}$ then  $0\leq\alpha_{i,j}\leq 1$ for all points in the bulk.
	\item Using this fact, it follows that the solution is \emph{unique}: given two harmonic functions  $f,g$ on $\mathcal{L}$ such that $f=g$ on $B$ implies $f_{i,j}=g_{i,j}$ for all $(i,j)\in \mathcal{L}$.
\end{enumerate}

One general way of solving a higher-dimensional linear recurrence equation, and in particular the Dirichlet's problem in Eq.~\eqref{eq:Direchlet}, is using separation of variables $\alpha_{i,j}=X_iY_j$.
With this, Eq.~\eqref{eq:2drec} simplifies to solving the one-dimensional recurrences
\begin{equation}\label{eq:RecLin}
  \begin{cases}
   X_{i+1}-2X_i+X_{i-1}=\lambda X_i  \\[1ex]
   Y_{j+1}-2Y_j+Y_{j-1}=-\lambda Y_j
  \end{cases}
\approx \quad
\begin{cases}
	X^{\prime \prime}(x)=\lambda X(x)  \\[1ex]
	Y^{\prime \prime}(y)=-\lambda Y(y)
\end{cases}
\end{equation}
where we also indicated the corresponding continuum analogues. $\lambda$ here is a constant whose possible values are restricted by the choice of boundary conditions. In particular, due to the linearity of the problem, we can decompose the Dirichlet's problem into four independent ones, with $\alpha$ vanishing along $3$ out of the $4$ boundaries in each case. This leads to a quantization condition on $\lambda$ and the general solution will be a linear combination of such fundamental solutions. For now, we keep $\lambda$ as an arbitrary parameter and consider what the form of the solutions admitted by Eqs.~\eqref{eq:RecLin} is. See additional details in Appendix~\ref{sec:sepvar} and in Ref.~\cite{Salathesis_22}.

We can solve Eqs.~\eqref{eq:RecLin} by finding the roots of the associated characteristic polynomials $r^2 - (2\pm \lambda)r +1=0$ where the two signs correspond to the equations for $X$ and $Y$ respectively. For each equation, the two roots are inverses of each other and are either real $(\xi_{x,y})$ or pure complex phases $(\nexp{\x k_{x,y}})$, depending on the value of $\lambda$. Overall, we can identify three main types of solutions: 
(i) When $\lambda=0$, we have $\xi_x=\xi_y=0$; this case contains the multipole conserved quantities discussed earlier\footnote{The one exception is the $(i^2-j^2)$ quadratic moment, which does not factorize in the horizontal and vertical directions. We can recover it by superposing fundamental solutions or, alternatively, by writing the recurrence relation in terms of the center of mass and relative coordinates, as $\alpha_{i,j}=X_{i+j}Y_{i-j}$.}.
(ii) When $0<|\lambda|<4$, one of the solutions is real while the other one is complex, e.g., $\alpha_{i,j}\propto (\xi_x)^i\nexp{\x k_yj}$. These correspond to conserved quantities that are exponentially localized in one dimension while being fully delocalized in the other. 
(iii) When $|\lambda|>4$, both solutions are real $\alpha_{i,j}=(\xi_x)^i(\xi_y)^j$, and hence can be exponentially localized near one of the corners of the system, depending on the modulus of $\xi_x$ and $\xi_y$.

Apart from giving insight into the family of possible conserved quantities, these fundamental solutions can be combined to obtain the unique solution corresponding to a choice of boundary values $\alpha^\dr{B}_{i,j}$. Instead of writing that general expression, we notice that close-form solution  of equation~\eqref{eq:Direchlet} is already known in the context of two-dimensional (unbiased) random walks~\cite{10.2307/3215145}. This is given by
\begin{equation} \label{eq:alpha}
	\alpha_{i,j}=\sum_{a=1}^L\alpha^\dr{B}_{a,L+1}T_{a,L}(i,j) +\sum_{a=1}^L\alpha^\dr{B}_{a,0}T_{a,1}(i,j) + \sum_{b=1}^L\alpha^\dr{B}_{0,b}T_{1,b}(i,j)+ \sum_{b=1}^L\alpha^\dr{B}_{L+1,b}T_{L,b}(i,j),
\end{equation}
where each sum corresponds to one of the four boundaries with the respective boundary values $\alpha^\dr{B}_{i,j}$ and 
\begin{equation}\label{eq:Texact}
	T_{a,b}(i,j)=\frac{2}{(L+1)^2}\sum_{r=1}^L\sum_{s=1}^L\frac{\sin(\frac{ir\pi}{L+1})\sin(\frac{js\pi}{L+1})\sin(\frac{ar\pi}{L+1})\sin(\frac{bs\pi}{L+1})}{2-\cos(\frac{r\pi}{L+1})-\cos(\frac{s\pi}{L+1}) }.
\end{equation}
Hence, $\alpha_{i,j}$ is given by the (discrete) convolution of $T_{a,b}(i,j)$ with  $\alpha^\dr{B}_{i,j}$. $T_{a,b}(i,j)$ 
is the discrete analogue of the Poisson kernel, whose convolution with the boundary condition solves the continuum Dirichlet problem~\cite{2005introduction}.  
We will use this fact to study the behavior of the solutions of the discrete recurrence relation Eq.~\eqref{eq:2drec} in the following section. 

While Eqs.~\eqref{eq:alpha} and~\eqref{eq:Texact} provide a way to construct the exact solutions for any boundary condition, in practice we instead solve the recurrence relation numerically, applying an iterative method outlined in App.~\ref{app:Numerics}. This has the advantage that it can be easily generalized to other cases where the exact kernel is not known, some of which are studied in Sec.~\ref{sec:generalization}. 

\subsection{Boundary localized charges and Mazur's bound}

A powerful tool to analytically prove finite boundary correlations is given by Mazur's bound~\cite{Mazur69} $M_{i,j}$, which lower bounds the infinite time-average auto-correlations by the overlap with a set of conserved quantities $\mathcal{Q}_{a}$. More explicitly this is 
\begin{align} 
	\lim_{T\to \infty}\frac{1}{T}\int_0^T\dr{d}t\,\avg{s_{i,j}(t)s_{i,j}(0)}\geq M_{i,j},
\end{align}
at any site $(i,j)$. To define the bound, we introduce an inner product between observables as $\avg{A,B} \equiv \avg{AB}$ where $\avg{A}=\frac{1}{|\mathcal{C}|}\sum_{\{s_{i,j}\}\in \mathcal{C}}{A(s_{i,j})}$ is the ``infinite-temperature'' average. With this definition, the bound can be written as
\begin{equation}\label{eq:MazurBound}
	M_{i,j}\equiv \sum_{a,b} \avg{s_{i,j},\mathcal{Q}_{a}}(K^{-1})_{a,b}\avg{\mathcal{Q}_{b},s_{i,j}}.
\end{equation}
Here, $K$ is a positive-definite matrix with elements $K_{a,b}=\avg{\mathcal{Q}_{a},\mathcal{Q}_{b}}$. If one includes only a single conserved quantity $\mathcal{Q}_{\{\alpha_{\vec{r}}\}}=\sum_{\vec{r}}\alpha_{\vec{r}}s_{\vec{r}}$, then the expression simplifies to
\begin{equation}\label{eq:MazurBound_1charge}
	M_{i,j}\equiv  \frac{|\avg{s_{i,j},\mathcal{Q}_{\{\alpha_{\vec{r}}\}}}|^2}{\avg{\mathcal{Q}_{\{\alpha_{\vec{r}}\}},\mathcal{Q}_{\{\alpha_{\vec{r}}\}}}},
\end{equation}
where $\langle\mathcal{Q}_{\{\alpha_{\vec{r}}\}},\mathcal{Q}_{\{\alpha_{\vec{r}}\}}\rangle$ is simply given by
\begin{equation}
    \avg{\mathcal{Q}_{\{\alpha_{\vec{r}}\}},\mathcal{Q}_{\{\alpha_{\vec{r}}\}}}=\frac{S(S+1)}{3}\sum_{i,j}(\alpha_{i,j})^2=\frac{S(S+1)}{3}\lVert\alpha\rVert_2^2,
\end{equation}
i.e., proportional to the $2$-norm of $\alpha$.
Hence, if one finds even a single conserved quantity that is strongly localized around some particular site at the boundary of the system and whose norm remains finite in the thermodynamic limit, this will give rise to finite boundary correlations. Alternatively, even if individual charges are not sufficiently well-localized, they can still be combined to give rise to a finite value on the RHS of Eq.~\eqref{eq:MazurBound}.

\subsubsection{Corner charges: Exponential localization} \label{sec:corners}

We first consider the situation near the corners of the square lattice. On a square lattice, the corner sites are completely decoupled under the dynamics, and hence provide a trivially conserved boundary charge. Let us ignore these, and consider instead the region close to the corner with coordinates $\vec{r}_0=(0,0)$. 

As we noted in Sec.~\ref{sec:recursion}, the recursion relation has fundamental solutions of the form $\alpha_{ij} \propto (\xi_x)^i (\xi_y)^j$. This suggests conserved quantities that are exponentially localized near one of the four corners of the system, similar to the exponentially localized quantities found for certain 1D systems in Ref. \cite{Sala_21}. Indeed, we can plug the above ansatz directly into the original recursion relation~\eqref{eq:2drec}, which then turns into
\begin{equation} \label{eq:exp_rec}
\xi_x+\frac{1}{\xi_x}+\xi_y+\frac{1}{\xi_y}=4.
\end{equation}
Solutions to this, restricted to the domain $|\xi_x|,|\xi_y|<1$ exist (see Fig.~\ref{fig:corners}a) and provide exponentially localized modes at the corner $(0,0)$ \footnote{Notice that if $(\xi_x,\xi_y)$ is a solution of Eq.~\eqref{eq:exp_rec}, so are $(\frac{1}{\xi_x},\frac{1}{\xi_y})$, $(\frac{1}{\xi_x},{\xi_y})$ and $({\xi_x},\frac{1}{\xi_y})$. These correspond to charges localized at the other three other corners of a finite lattice.}.

\begin{figure}
	\centering
	\includegraphics[width=\linewidth]{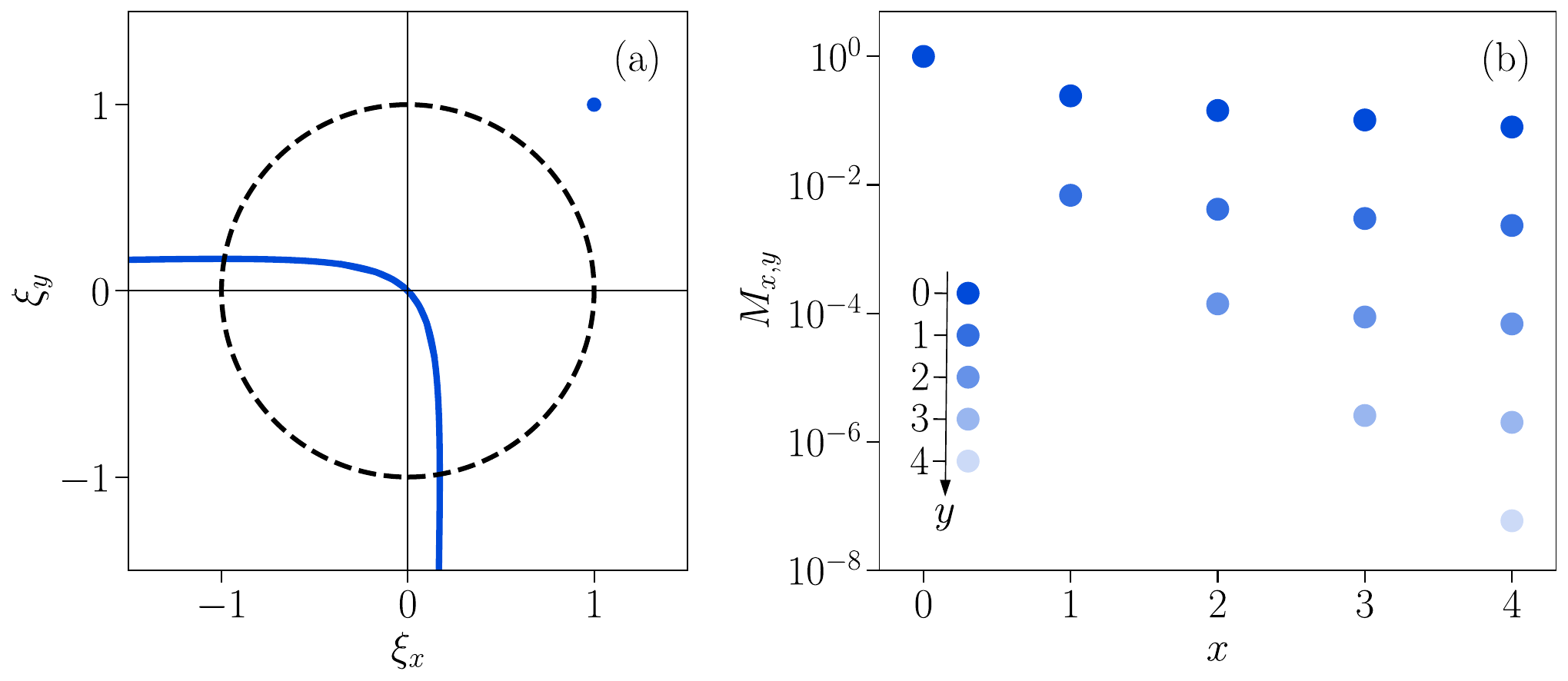}
	\caption{\textbf{Exponentially localized charges at the corners.} (a) Solutions of Eq.~\eqref{eq:exp_rec}. The black dashed line shows the unit circle. (b) Largest value of Mazur's bound in Eq.~\eqref{eq:cor_opt} over the set of solutions plotted in panel (a), at different coordinates $(x,y)$.  }
	\label{fig:corners}
\end{figure}

We can plug these conserved quantities into Mazur's bound~\eqref{eq:MazurBound_1charge} at a site $(x,y)$ close to the corner (i.e,. $|x|, |y|$ not scaling with $L$) and including only this conserved quantity to find 
\begin{equation} \label{eq:cor_opt}
M_{x,y}=\frac{S(S+1)}{3}\frac{(\alpha_{x,y})^2}{\sum_{i,j}(\alpha_{i,j})^2} \xrightarrow{\quad} \frac{S(S+1)}{3}(\xi_x)^{2x}(\xi_y)^{2y}(1-(\xi_x)^2)(1-(\xi_y)^2)
\end{equation}
in the limit $L\to \infty$. Therefore, as long as the site $(x,y)$ is at finite distance from the corner, Mazur's bound will be finite, which in turn shows that the time-average boundary correlation saturates to a finite value. The optimal lower bound can be found by maximizing the expression on the right hand side of Eq.~\eqref{eq:cor_opt} over the set of solutions of the equation~\eqref{eq:exp_rec}. The optimal bounds thus obtained for different coordinates $(x,y)$ are shown in Fig.~\ref{fig:corners}b.

\subsubsection{Mid-boundary charges: Power-law localization} 

In Sec.~\ref{sec:recursion} we found that there exist fundamental solutions of the recursion equation of the form $\alpha_{i,j}\propto (\xi_x)^i\nexp{\x k_yj}$, i.e. exponentially localized in one direction and plane-wave-like in the other. These are symmetries localized at the boundary. However, they are not normalizable and do not immediately yield a finite Mazur bound. Here we instead ask whether we can find solutions that are localized in all directions. To answer this, we return to Eq.~\eqref{eq:Direchlet}, and recall that it is a lattice discretization of the continuum Laplace equation. Motivated by this, we first consider the problem in the continuum, which will guide us in constructing new boundary charges and explaining their localization properties. 

\begin{figure}
	\centering
	\includegraphics[width=\linewidth]{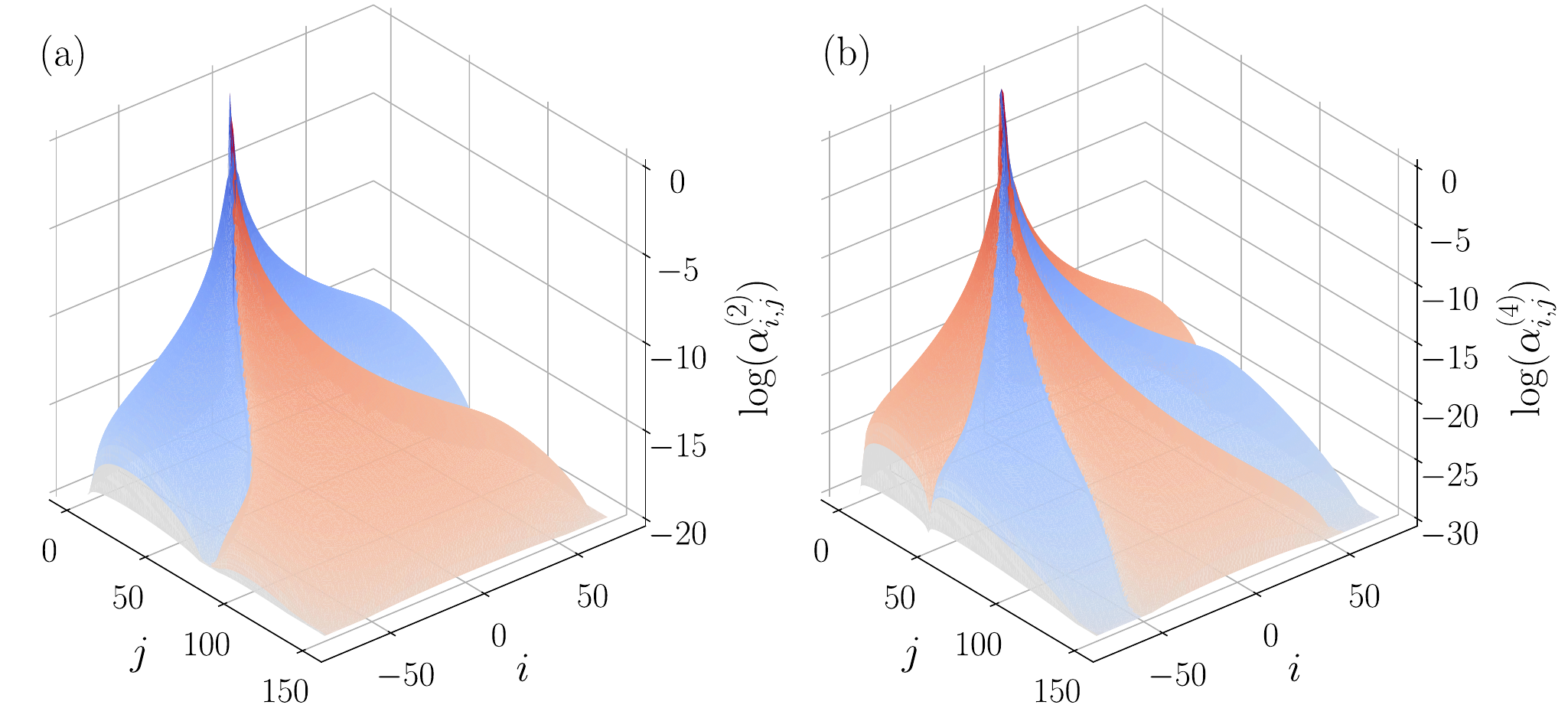}
	\caption{\textbf{Boundary charges with $\boldsymbol{n > 1}$.} Solutions of Eq.~\eqref{eq:Direchlet} with $n=2$ (panel (a)) and $n=4$ (panel (b)), which decay towards the bulk as $\alpha^{(2)}_{0,j}\sim j^{-3}$ and $\alpha^{(4)}_{0,j}\sim j^{-5}$ respectively. Red and blue correspond to positive and negative values of $\alpha^{(2)}_{i,j}$.}
	\label{fig:higher_moment}
\end{figure}

As we are interested in the long-distance behavior of boundary charges, we consider Laplace's equation on a semi-infinite plane, i.e., on the domain $(x,y)\in (-\infty,\infty)\times (0,\infty)$\footnote{Note the change of notation: in this section $(0,0)$ refers to a location in the middle of the boundary.}. That is,
\begin{equation} \label{eq:cont2d}
    \begin{cases}
       (\partial_x^2 + \partial_y^2)\alpha(x,y)=0, \\[1ex]
       \alpha^\dr{B}(x,0)= f(x),\,\alpha(x\to\pm \infty,y)=0,\, \alpha(x, y\to + \infty)=0.
    \end{cases}
\end{equation}

Given a boundary condition $f(x)\in L^1(\mathbb{R})$, the solution is given by the convolution of $f$ with the Poisson kernel $P(x,y)=\frac{1}{\pi}\frac{y}{x^2+y^2}$~\cite{2005introduction,HFT}, i.e., $\alpha(x,y)=\int_{-\infty}^{\infty}\mathrm{d}z\,P(x-z,y)f(z)$, analogous to the discrete case.
In particular, we want to study the decay of a solution localized around $(0,0)$. First, let us take $f(x)=\delta(x)$, the Dirac delta distribution. This leads to a diverging solution at the boundary. Nevertheless, our goal is  to understand the long distance behavior where the solution is well-behaved. Alternatively, we could use a regularized version of the delta distribution instead. In any case, this boundary condition gives $\alpha(x,y) = P(x,y)$, which decays as $\alpha\sim 1/r$ at large distances. Recall that $r$ refers to the orthogonal distance to the boundary. Hence, while it decays towards the bulk, the decay is too slow for the norm $\int \mathrm{d}x\,\mathrm{d}y\,|\alpha(x,y)|^2$ to converge.

Can we construct more localized solutions?
Note that the above solution is reminiscent to the electrostatic potential generated by a point-like source at the boundary. This suggests a natural way of constructing solutions with a faster decay: replace the point-like source with a dipole or some higher multipole source.
This can be achieved by using a boundary condition that is a higher derivative of the delta distribution: $f(x)=\delta^{(n)}(x)\equiv\partial_x^n\delta(x)$. 
Using this boundary condition one finds, after integrating by parts, the solution $\alpha^{(n)}(x,y) = (-1)^n \partial_z^n P(x-z,y)|_{z=0}$. These decay asymptotically as $\alpha^{(n)}(0,y) \sim y^{-(n+1)}$,
making them increasingly localized as we make $n$ larger. In particular, for any $n\geq 1$ they decay sufficiently quickly to make their norm convergent. 

We now want to find an analogous set of solutions on the lattice. To do so, we first need to discretize the boundary condition $f(x)=\delta^{(n)}(x)$. This can be simply done by replacing the derivatives by (central) finite differences of the Kronecker delta $\delta_{i,0}$. In particular, the $n$-th derivative of a lattice function $f_i$ at site $i$ is given by $\Delta_i^{n} f_i=\sum_{k=0}^{n}\binom{n}{k}(-1)^kf_{i+\frac{n}{2}-k}$, (here and below we focus on $n$ even). Hence, for $f_i=\delta_{i,0}$, this requires fixing $n+1$  non-zero boundary values given by 
\begin{equation} \label{eq:bound_alpha}
\alpha^{(n),\dr{B}}_{i,0}=(-1)^i\left.\binom{n}{\frac{n}{2}-i}\middle/\binom{n}{\frac{n}{2}}\right.,
\end{equation}
for $i\in \{-\frac{n}{2},\dots,0,\dots,\frac{n}{2}\}$. We have normalized $\alpha^{(n),\dr{B}}$ such that $\alpha^{(n),\dr{B}}_{0,0}=1$ for all $n$. 
Given this boundary condition, we can find the solution of Eq.~\eqref{eq:2drec} in the bulk.
In Fig.~\eqref{fig:higher_moment} we plot the solution for $n=2$ and $n=4$. The two different colors emphasize the change of sign of the solution along the nodal lines where $\alpha^{(n)}$ vanishes. E.g., $\alpha^{(2)}$ is positive in the central lobe (red) and negative on the sides (blue). A higher $n$, gives $n$ nodal lines with the corresponding changes of sign. The solutions $\alpha^{(n)}$ give rise to a set of conserved quantities which we denote by $\mathcal{Q}^{(n)}_{\vec{r}_0}$, where $\vec{r}_0$ refers to the location around which they are localized ($\vec{r}_0 =(0,0)$ in the discussion above).

\begin{figure}
	\centering
	\includegraphics[width=\linewidth]{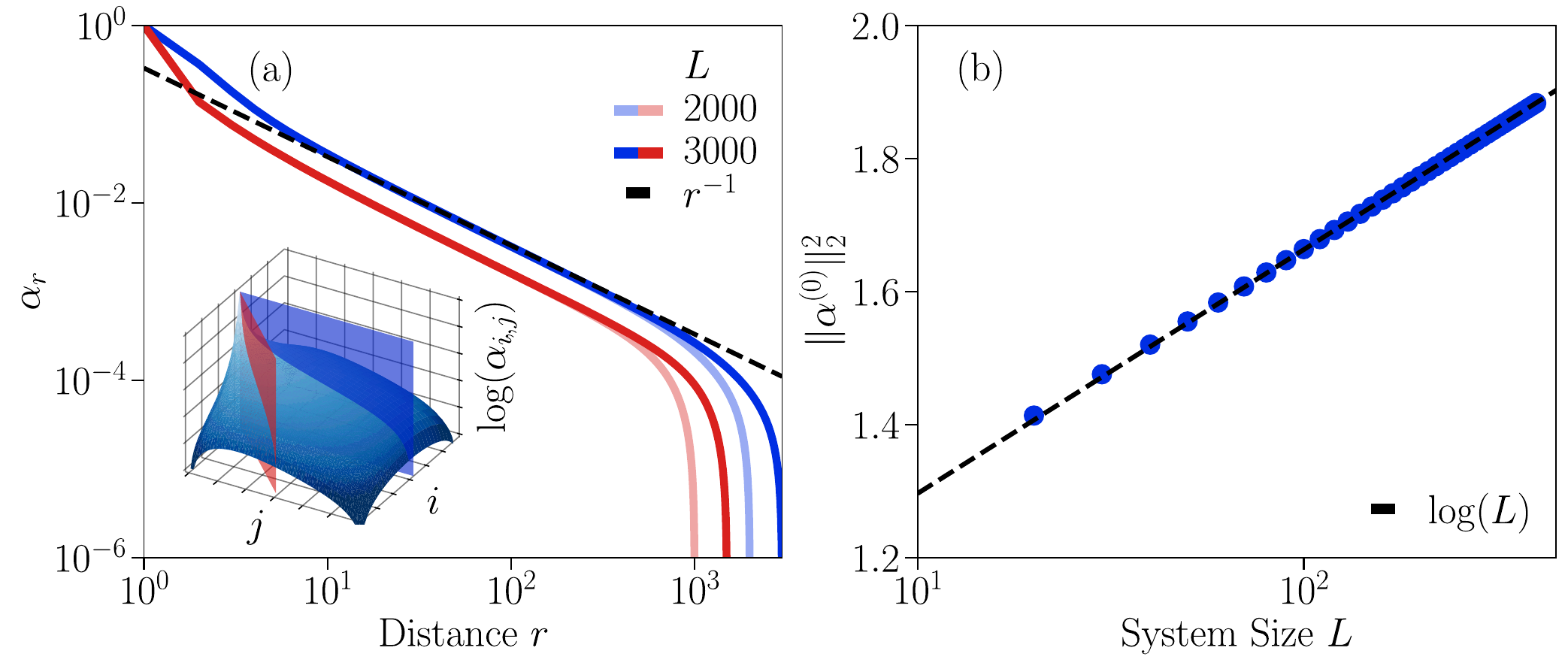}
	\caption{\textbf{Localization of $\boldsymbol{n=0}$ charges.} (a) Decay of $\alpha^{(0,\frac{L}{2})}_{i,j}$ towards the bulk along two different directions specified by the two planes in the 3D plot. In both cases, the solution decays as a $1/r$ with the distance $r$.  (b) Divergence of the $2$-norm of $\alpha^{(0,\frac{L}{2})}_{i,j}$ in the limit $L\to \infty$. }
	\label{fig:systemsizescaling}
\end{figure}

The solutions $\alpha^{(n)}$ inherit their properties from their continuum counterparts. For $n=0$, where the boundary condition is simply a Kronecker delta, we find a $1/r$ decay in $\alpha^{(0)}_{i,j}$, as shown in Fig.~\ref{fig:systemsizescaling}a. This means that the norm of $\mathcal{Q}^{(0)}$ diverges logarithmically with the linear size of the system, as shown in Fig.~\ref{fig:systemsizescaling}b. Hence, these charges are not sufficiently strongly localized at the boundary for a single one of them to give a finite Mazur's bound. Instead, we can make use of the general expression for Mazur's bound~\eqref{eq:MazurBound} and include all of them simultaneously. We find that this is sufficient to obtain a lower bound that is tight to the result obtained in our numerical simulations for any $S$ (see dashed black line in Fig.~\ref{fig:fig1}b). Additional details can be found in Appendix~\ref{app:scaling} and Ref.~\cite{Salathesis_22}.

On the other hand, in agreement with the continuum case, the solutions for higher $n$ decay faster, as $r^{-(n+1)}$, as we confirm numerically in Fig.~\ref{fig:MZB_n}a. As a consequence, a \emph{single} one of these charges is sufficient to give a finite Mazur's bound, using Eq.~\eqref{eq:MazurBound_1charge}. 
However, while the lattice and continuum problems match at long distances, there are differences in their behavior close to the boundary that affect Mazur's bound. In particular, while the charges become more strongly localized towards the bulk for higher $n$, they also become more spread out at the boundary, which leads to an increase in their $2$-norm. As a consequence, the value of the bound~\eqref{eq:MazurBound_1charge} in fact \emph{decreases} with $n$ (while remaining finite for any finite $n$). Let us provide a rough estimate. For sufficiently large $n$, $\alpha^{(n)}$ decays quickly away from the boundary. Motivated by this, we estimate the scaling of $\lVert\alpha^{(n)}\rVert_2^2$ with $n$, by replacing it with the norm of the contribution from the boundary alone, $\lVert\alpha^{(n)}\rVert_2^2\sim \lVert\alpha^{(n),\dr{B}}\rVert_2^2$. The latter can be evaluated from Eq.~\eqref{eq:bound_alpha} and gives $\lVert\alpha^{(n),\dr{B}}\rVert_2^2\sim \sqrt{n}$ in the limit of large $n$. We evaluate Mazur's bound exactly for various $n$ in Fig.~\ref{fig:MZB_n}b, and find that it agrees with this estimate. The same figure also shows Eq.~\eqref{eq:MazurBound} evaluated using \emph{all} $n=0$ charges taken together, which gives a stronger bound. In Appendix~\ref{app:scaling} we show the finite-size scaling of Mazur's bound with system size for several $n$. 

All together, we managed to construct different families of conserved quantities $\{\mathcal{Q}_{\vec{r}_0}^{(n)}\}$, which become more and more localized towards the bulk when considering higher and higher finite differences.  
Different $\mathcal{Q}_{\vec{r}_0}^{(n)}$ are localized around a boundary site with coordinates $\vec{r}_0$. For a given $n$, $\alpha^{(n),\dr{B}}$ vanishes everywhere except on $n+1$ sites centered around $\vec{r}_0$. The set of charges where $\vec{r}_0$ are at distance $\frac{n}{2}$ along the boundary are then trivially linearly independent. Hence, this proves that there exist at least $\mathcal{O}(L)$ of those. Nevertheless, different families parametrized by different $n$'s are not linearly independent of each other. In particular, we can use the $n=0$ charges to construct all other families.

\begin{figure}
	\centering
	\includegraphics[width=\linewidth]{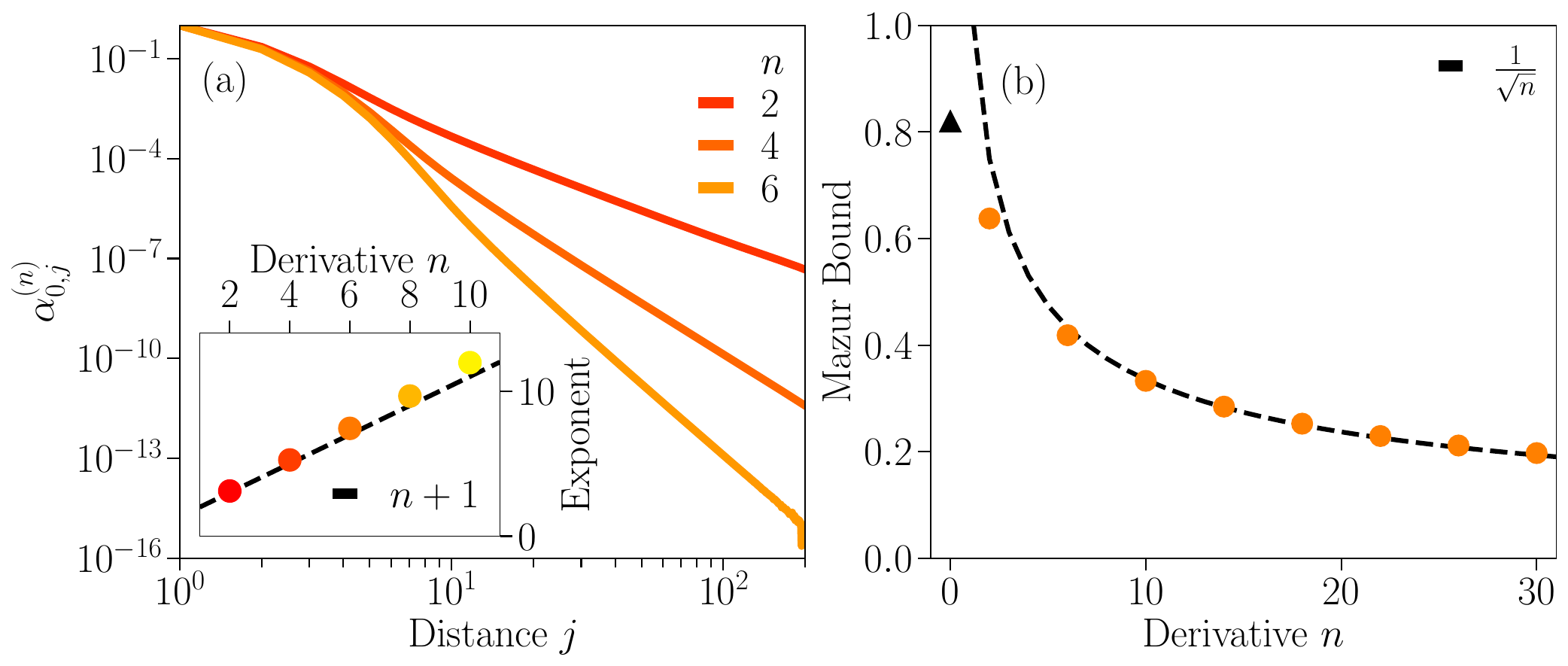}
	\caption{\textbf{Mazur's bound and higher-moment distributions.} (a) Power-law decay of higher-moment charges with modulation $\alpha^{(n)}_{0,j}\sim j^{-(n+1)}$ as predicted from the continuum Laplace equation. (b) Dependence of Mazur's bound \eqref{eq:MazurBound_1charge} with $n$, when including a single higher-moment charge with modulation $\alpha^{(n)}$. Mazur's bound including all $n=0$ charges at the boundaries is shown as a black triangle. For this data we used a linear system size of $L=300$. }
	\label{fig:MZB_n}
\end{figure}

\subsection{Generalizations} \label{sec:generalization}

So far we have focused on the particular square lattice model defined in Eq.~\eqref{eq:Gate_tilt}.
However, our construction can be extended to higher dimensions as well as to different lattices as long as the associated recurrence relation

\begin{equation} \label{eq:lin_rec}
    n_{\vec{0}}\alpha_{\vec{r}}+\sum_{\vec{v}\in\mathcal{N}_{\vec{r}}}n_{\vec{v}}\alpha_{\vec{v}}=0,
\end{equation}
corresponds to a discrete Laplace equation, i.e., its solutions are discrete harmonic functions.  Some examples of models satisfying this requirement are shown in Fig.~\ref{fig:fig1}. For example, while the models in Figs.~\ref{fig:fig1}(a,b) are defined on different lattices, they both correspond to different discretizations of the same continuum Laplace equation. Hence the same long-distance decay for conserved quantities localized at the boundary of the system applies to them. 

To extend the construction to higher dimensions, we now consider the continuum Laplace equation in $d+1$ dimensions in the hyperplane $\mathbb{H}^{d+1}= \{(\vec{x},y)\in \mathbb{R}^{d+1}|y>0\}$. The corresponding Poisson kernel is a generalization of the 2D one and reads $P_d(\vec{x},y) = c_d y (y^2+\lVert \vec{x}\rVert^2)^{-\frac{d+1}{2}}$, with some dimension-dependent constant $c_d$~\cite{HFT}. Once again the squared $2$-norm of the (regularized) solutions with $\alpha^{\mathrm{B}}({\vec{x},0})=\delta(\vec{x})$ diverges logarithmically in the thermodynamic limit and hence these are not sufficient to prove finite boundary correlations. Instead we can again consider  ``multipole'' boundary conditions. Fixing $\alpha^{\mathrm{B}}(\vec{x},0)=\delta^{(\vec{n}_d)}(\vec{x}) \equiv \partial_{x_1}^{n_1}\ldots \partial_{x_d}^{n_d} \delta(\mathbf{x})$, the decay of the solution at large distances is given by $\alpha^{(\vec{n}_d)}(\vec{0},y)\propto y^{-\sum_{i=1}^d n_i - d}$. Therefore, we find that one can construct charges that are sufficiently localized at the $d$-dimensional boundaries of the system as to provide a finite Mazur's bound. 

The previous discussion appears to suggest that as long as the continuum limit of the linear recurrence Eq.~\eqref{eq:lin_rec} is given by the Laplace equation, one should be able to show that the boundary correlations are finite. However, some subtleties need to be addressed. First of all, recall that we explicitly made use of the fact that solutions of Eq.~\eqref{eq:lin_rec} are discrete harmonic functions. This ensured that for any boundary condition (and any system size) there exist a unique solution. If this was not the case, even the existence of a solution is not ensured. One such example is the 2D model defined in Eq. (5) of Ref. \cite{Sala_21}. Our construction of boundary charged does not apply in that case; however, we have observed numerically that boundary correlations nevertheless fail to decay even in that model.

One set of models to which our previous discussion \emph{does} directly apply are generated by the following local gates:
\begin{equation} 
G_{(n_1,n_2,n_3,n_4)}=\{n_{-1,0},n_{0,1},n_{0,0},n_{0,-1},n_{1,0}\}_{\vec{r}}=\{n_1,n_2,-N,n_3,n_4\}_{\vec{r}},
\end{equation}
with $N = \sum_i n_i$. These correspond to the following recurrence relation,
\begin{equation}
    \alpha_{i,j}=p_1\alpha_{i+1,j}+p_2\alpha_{i-1,j}+p_3\alpha_{i,j+1}+p_4\alpha_{i,j-1},
\end{equation}
with $p_i=n_i/N$. These equations were solved exactly in Ref.~\cite{10.2307/3215145}. When $n_1=n_2$ and $n_3=n_4$, the continuum limit of this recurrence reads
\begin{equation}
\left(p\partial_x^2+(1-p)\partial^2_y\right)\alpha_\mathrm{in}(x,y)=0,
\end{equation}
where we defined $p=2p_1$. Solutions of this equation can be found by performing the change of variables $x\to x/\sqrt{p}$, $y\to y/\sqrt{1-p}$, and then are given by the evaluation of the solution of the isotropic problem \eqref{eq:cont2d} at $\alpha_\mathrm{in}(x,y)=\alpha(x/\sqrt{p},y/\sqrt{1-p})$. Hence, our discussion about the localization properties of boundary charges also applies to these family of anisotropic models.

\section{Quantum systems and strong zero modes} \label{sec:SZM}
So far our discussion has focused on classical Markov chain dynamics. However, it is easy to map the systems we studied onto quantum Hamiltonians that share the same symmetries. Given a set of local gates $G_{\vec{r}}=\{n_{\vec{v}}\}_{\vec{v}\in \mathcal{R}_{\vec{r}}}$, acting on some region $\mathcal{R}_{\vec{r}}=\{\vec{r}\}\cup\mathcal{N}_{\vec{r}}  $ centered on a site $\vec{r}$, one can construct a quantum spin-$S$ Hamiltonian $\hat{H}_\dr{G}=\sum_{\vec{r}}J_{\vec{r}} \hat{h}_{\vec{r}}$, on the same lattice as
\begin{equation}
    \hat{h}_{\vec{r}}=\bigl(\hat{S}_{\vec{r}}^{\mkern2mu\operatorname{sgn}(n_{\vec{0}})}\bigr)^{|n_{\vec{0}}|}\bigotimes_{\vec{v}\in\mathcal{N}_{\vec{r}}} \bigl(\hat{S}_{\vec{v}}^{\mkern2mu\operatorname{sgn}(n_{\vec{v}})}\bigr)^{|n_{\vec{v}}|}+\textrm{H.\,c.},
\end{equation}
where $\operatorname{sgn}(n_{\vec{v}})$ is the sign of $n_{\vec{v}}$, and $J_{\vec{r}}$ is an arbitrary choice of real coefficients. By construction, when written in the computational basis, $\hat{H}_\dr{G}$ has the same block-diagonal structure as the original Markov generator built from the gates $G_{\vec{r}}$. In particular, it shares both its fragmentation properties and its spatially modulated symmetries (with $\sum_{\vec{r}} \alpha_{\vec{r}} s_{\vec{r}}$ replaced by $\sum_{\vec{r}} \alpha_{\vec{r}} \hat{S}^z_{\vec{r}}$). These properties are also preserved by any additional diagonal terms $\hat{V}(\{\hat{S}^z_{\vec{r}}\})$ that are function of the $\hat{S}^z_{\vec{r}}$ only. 

However, $\hat{H}_\dr{G}$ might also have additional symmetries, not present in the classical model. Of particular interest are $\mathbb{Z}_2$ symmetries like $R_{x}=\prod_{\vec{r}} \dr{e}^{\dr{i}\pi \hat{S}^{x}_{\vec{r}}}$ and $R_{y}=\prod_j \dr{e}^{\dr{i}\pi \hat{S}^{y}_j}$. Indeed, $R_{x}\hat{S}^{\pm}_{\vec{r}}R_{x}=\hat{S}^{\mp}_{\vec{r}}$ and $R_{x}\hat{S}^{z}_{\vec{r}}R_{x} = - \hat{S}^z_{\vec{r}}$ so the Hamiltonians constructed above are invariant under $R_x$ as long as $\hat{V}$ is built up from even products in the local $\hat{S}^z$ operators. The importance of these additional discrete symmetries stems from the fact that they anticommute with the spatially modulated symmetries and therefore lead to exact degeneracies in the many-body spectrum~\cite{Fendley_2012}.

To have these degeneracies, any spatially modulated symmetry of the form $\sum_{\vec{r}} \alpha_{\vec{r}} S^z_{\vec{r}}$ is sufficient, along with one of the aforementioned $\mathbb{Z}_2$ symmetries. For the discrete Laplacian models there are always such conserved quantities, independent of the choice of boundary conditions. However, we can modify the models in a way such that only the charges localized at the boundaries remain, which exist only for open boundaries. We do so by defining the local gates
\begin{equation}\label{eq:Gate_n}
	    G^{(p)}=\{n_{-1,0},n_{0,1},n_{0,0},n_{0,-1},n_{1,0}\}_{\vec{r}}=\{1,1,-N,1,1\}_{\vec{r}}.
\end{equation}
with associated recurrence relation reads
\begin{equation}
    N\alpha_{i,j}=\alpha_{i+1,j}+\alpha_{i-1,j}+\alpha_{i,j+1}+\alpha_{i,j}+\alpha_{i,j-1}.
\end{equation}
Choosing $N>4$ rules out the conservation of the global charge and hence of any of its higher-moments. In general, solutions of this recurrence equation are not discrete harmonics, but rather correspond to the eigenvalue problem \ $\laplace \alpha = \varepsilon \alpha$ with $\varepsilon\neq 0$ in the continuum. In this case, one only finds solutions of the form $\alpha_{i,j}=(\xi_x)^i(\xi_y)^j$ and $\alpha_{i,j}=(\xi_x)^i\dr{e}^{\dr{i}k_y j}, \dr{e}^{\dr{i}k_x i}(\xi_y)^j$; both localized near the boundary. Importantly, this family of models does not show spatially modulated global conserved quantities for periodic boundary conditions. We thus expect them to feature similar phenomenology to that of the strong zero modes (SZM) introduced in Ref.~\cite{Fendley_2012}, with degeneracies throughout the many-body spectrum for open, but not for closed boundaries. This construction can be extended to higher dimensions by for example considering a $d$-dimensional cubic lattice with $n_{0,0}=-N$, where $N>2d$, and the remaining contributions equals to $+1$ as in Eq.~\eqref{eq:Gate_n}. These are higher-dimensional generalizations of the 1D models with exponentially localized symmetries introduced in Ref.~\cite{Sala_21}.

Nevertheless, despite the similarities with SZM, some important differences remain. First, while the boundary modes of Ref.~\cite{Fendley_2012} are only approximately conserved for finite systems, ours are exact for any $L$. This somewhat changes the logic of the construction: Instead of using the zero mode to toggle between the two different symmetry sectors of the exact $\mathbb{Z}_2$ symmetry, we can also classify energy eigenstates using the boundary symmetries. Arguably, the most important difference to standard SZM is that our boundary modes correspond to continuous, rather than discrete symmetries (i.e., there is no ``normalization condition'' of the form $(\hat{\mathcal{Q}})^m = 1$ for any integer $m$). This condition appears to be ``highly non-trivial and fundamental''~\cite{Fendley_2016} to ensure a non-zero radius of convergence in the perturbative construction of such modes (indeed, our boundary modes are presumably highly sensitive to any additional perturbations). While this condition appears to hold for previous constructions of SZM found in the literature, it is not clear whether it should be generally imposed~\cite{Fendley_2016}. We therefore leave it to future work to determine whether the models introduced here can be meaningfully fit into the framework of SZM.

\section{Conclusions and outlook}\label{sec:conclusions}

In this work we studied a family of models, which we named discrete Laplacian models, and which can be defined on an arbitrary lattice (or, more generally, bounded-degree graph). We proved two main results about these models. First, we proved that bulk auto-correlations saturate to a finite value when the on-site configuration spaces are chosen to take their minimal values. Our proof works by explicitly constructing spatial regions whose configurations are left unchanged by the dynamics, which can then be used to divide the rest of the system into disconnected regions. 

Secondly, we constructed a hierarchy of linearly (in the linear system size) many conserved quantities, which are localized at the boundary of the system. These are new instances of spatially modulated symmetries whose modulations satisfy a discrete Laplace equation with Dirichlet boundary conditions. We showed that while these are only power-law (rather than exponentially) localized, the power of decay can be systemically increased by choosing appropriate boundary conditions. As a result, we are able to prove that boundary correlation are finite, by making use of Mazur's bound~\cite{Mazur69}.

There are a number of questions left open by our work. One interesting aspect of our work is the construction of strongly fragmented models in higher dimensions, whereas previous examples  in the recent literature were explicitly one-dimensional in nature, relying on the conservation of certain one-dimensional patterns due to hard-core constraints~\cite{Rakovszky_slioms,Moudgalya_2022}. While we proved the presence of strong fragmentation in discrete Laplacian models, constructing the conserved quantities to label all connected components, and understanding their algebraic structure in the spirit of Refs. \cite{Rakovszky_slioms,Moudgalya_2022}, is an interesting challenge\footnote{We point out that conserved quantities with a strictly bounded spatial support can be easily ruled out---at least for hypercubic lattices---by generalizing the argument from Appendix H of Ref. \cite{khudorozhkov2021hilbert}.}. This would shed more light on their behavior and would help in finding other strongly fragmented higher dimensional models.

While we proved strong fragmentation only in the case when we restricted the local spin (number of particles per site) to its smallest possible value, it is likely that even away from this limit, one would find a transition from a weakly to a strongly fragmented regime by tuning the density of particles per site, analogously to the one-dimensional case studied in Ref. \cite{morningstar2020_kinetic} ( as well as other kinetically constrained models~\cite{Ritort03,Martinelli_2018}). Understanding the higher dimensional versions of this transition is another interesting open problem.

The relative novelty of the new classes of modulated symmetries we uncovered also naturally leads to many open questions.
The most pressing one is to understand their level of fine-tuning. In particular, what minimal mathematical structure is required to realize such symmetries, and to which extent it is sufficient for them to just be approximately conserved.
Moreover, this work provides a stepping stone for a more ambitious goal: classifying all the possible types of spatially modulated symmetries that a physical system with local interactions can possess. While in 1D (with a finite number of modulated symmetries) it appears that only $\alpha_j=r^j$ with $r$ an algebraic number are allowed, the infinitely many number of conserved quantities appearing in higher dimensions permit richer modulations.

On a different note, the constructions of quantum models in Sec.~\ref{sec:SZM} with conserved quantities that are either localized in one direction and plane-wave like in the other, or localized at the corners is reminiscent of topological phases of matter where low-energy modes localized at the boundaries appear. The conserved quantities associated to corners in particular resemble the corner modes of higher order topological modes, which have been previously linked to systems with multipole symmetries~\cite{You_2021}. Whether these analogies can be pushed further is an interesting open problem. 

\section*{Acknowledgements}

We thank Juan Garrahan for bringing to our attention relevant references and for illuminating discussions about kinetically constrained models. We also thank Fabian Essler, Paul Fendley and Sanjay Moudgalya for insightful discussions. This research was financially supported by the European Research Council (ERC) under the European
Union’s Horizon 2020 research and innovation program
under grant agreement No. 771537. F.P. acknowledges the support of the Deutsche Forschungsgemeinschaft (DFG, German Research Foundation) under Germany’s Excellence
Strategy EXC-2111-390814868. F.P.’s research is part of
the Munich Quantum Valley, which is supported by the
Bavarian state government with funds from the Hightech
Agenda Bayern Plus. T.R. is supported in part by the Stanford Q-Farm Bloch Postdoctoral Fellowship in Quantum Science and Engineering.

\newpage
\begin{appendix}

\renewcommand{\theequation}{\thesection.\arabic{equation}}
\setcounter{equation}{0}

\section{Finite-size Scaling Analysis} \label{app:scaling}

In this appendix we provide a more in-detail analysis of the finite-size scaling of the various numerical results we discussed in the main text. 

Fig.~\ref{fig:corrscaling}a shows that the finite saturation value of bulk auto-correlations for the model in Eq.~\eqref{eq:Gate_tilt} when $S=2$, does not scale down with system size, being the data converged in system size for all shown times. This data has been obtained for periodic boundary conditions and averaged over $N=100$ circuit realizations. Fig.~\ref{fig:corrscaling}b provides the finite-size scaling of boundary auto-correlations. Here, we required $N=400$ simulations to decrease the late-time fluctuations. 
\begin{figure}
	\centering
	\includegraphics[width=\linewidth]{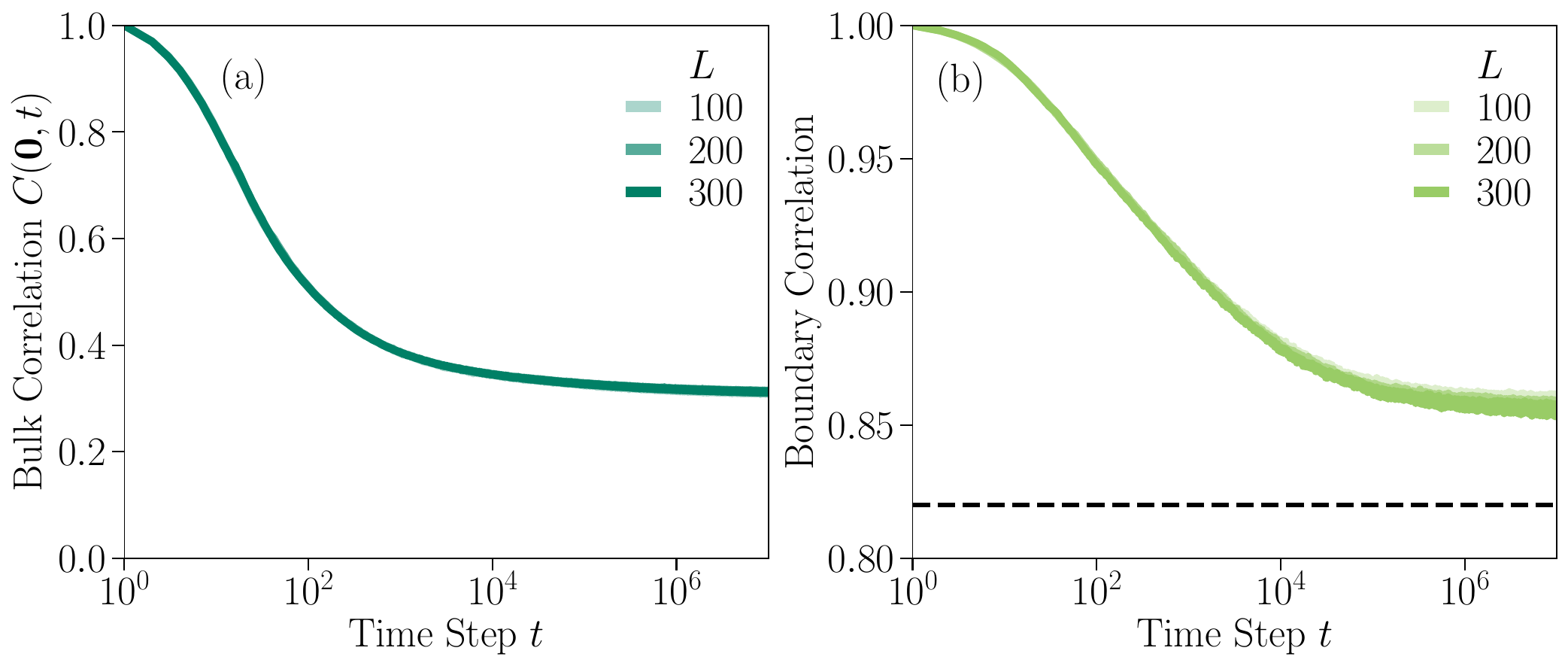}
	\caption{\textbf{Finite-size scaling of correlations.} (a) Finite-size scaling of bulk auto-correlations for $S=2$ (a) and boundary auto-correlations (b) for model in Eq.~\eqref{eq:Gate_tilt}. The data has been averaged over $N=100$ and $N=400$  circuit realizations respectively.}
	\label{fig:corrscaling}
\end{figure}

In the following we show the scaling of Mazur's bound with system size. We start with its scaling when only including a single higher-moment charge $\mathcal{Q}^{(n)}_{\vec{r}_0}$. This is shown in Fig.~\ref{fig:mbscaling}a for several $n=2,4,6,8,10$. In addition, we also study the finite-size scaling when including $\mathcal{O}(L)$ different $n=0$ charges Fig.~\ref{fig:mbscaling}b. We first note, that while linearly independent, these conserved quantities are not orthogonal with respect to the infinite temperature inner product
\begin{equation}
	\bigl\langle\mathcal{Q}_{\vec{r}_0},\mathcal{Q}_{\vec{r}_0^\prime}\bigr\rangle=\frac{S(S+1)}{3}\sum_{i,j} \alpha^{\vec{r}_0}_{i,j}\alpha^{\vec{r}_0^\prime}_{i,j},
\end{equation}
and thus, to compute $M_{s_{x,y}}$ for spin correlations $\avg{s_{x,y}(t)s_{x,y}(0)}$ at the boundary, we need to use the general expression \eqref{eq:MazurBound}.
Without loss of generality and due to the $\mathbb{Z}_4$ symmetry of the lattice and local gates, we can restrict ourselves to focus solely on one boundary, e.\,g. on a site with coordinates $(0,y)$ where $1\leq y\leq L$.
As it turns out, we do not need to include all conserved quantities but only the ones of the form $\alpha^{\dr{B},a}_{i,j}=\delta_{i,0}\delta_{j,a}$ for $1\leq a\leq L$, as all the remaining boundaries have exponentially small overlap with $s_{x,y}$. Consequently, one finds
\begin{equation}\label{eq:haar}
	\bigl\langle s_{x,y},\mathcal{Q}_a\bigr\rangle=\frac{S(S+1)}{3}\alpha^{\dr{B},a}_{x,y}=\frac{S(S+1)}{3}\delta_{j,a}
\end{equation}
leading to
\begin{align} 
	M_{s_{0,y}}=\frac{S(S+1)}{3}\sum_{a,b} \delta_{y,a}\cdot (K^{-1})_{a,b}\cdot \delta_{y,b}=\frac{S(S+1)}{3} (K^{-1})_{y,y}.
\end{align}
In the limit $L\to \infty$, this can give a finite value depending on the scaling of the diagonal matrix elements $(K^{-1})_{y,y}$ with $L$. 
Deriving the scaling of a particular matrix element $(K^{-1})_{y,y}$ with system size, however, is an analytically difficult task. We provide a finite-size scaling in Fig.~\ref{fig:mbscaling}b (orange dots). Alternatively, we can instead look for a lower bound of the averaged boundary auto-correlations
\begin{align} \label{eq:bMZ}
	&\lim_{T\to \infty}\frac{1}{T}\int_0^T\dr{d}t\,\frac{1}{4L}\sum_{\mathclap{\vec{r}\in B}}\avg{s_{\vec{r}}(t)s_{\vec{r}}(0)}\geq M_\dr{B},
\end{align}
where Mazur's bound can be expressed in terms of the trace of the inverse matrix $K^{-1}$
 \begin{equation} 
 	M_\dr{B}=\frac{S(S+1)}{3} \frac{\operatorname{tr}(K^{-1})}{L}.
 \end{equation}
This allows us to limit our calculations to a more general property of the matrix $K$. The scaling with system size is shown in Fig.~\ref{fig:mbscaling}b.

\begin{figure}
	\centering
	\includegraphics[width=\linewidth]{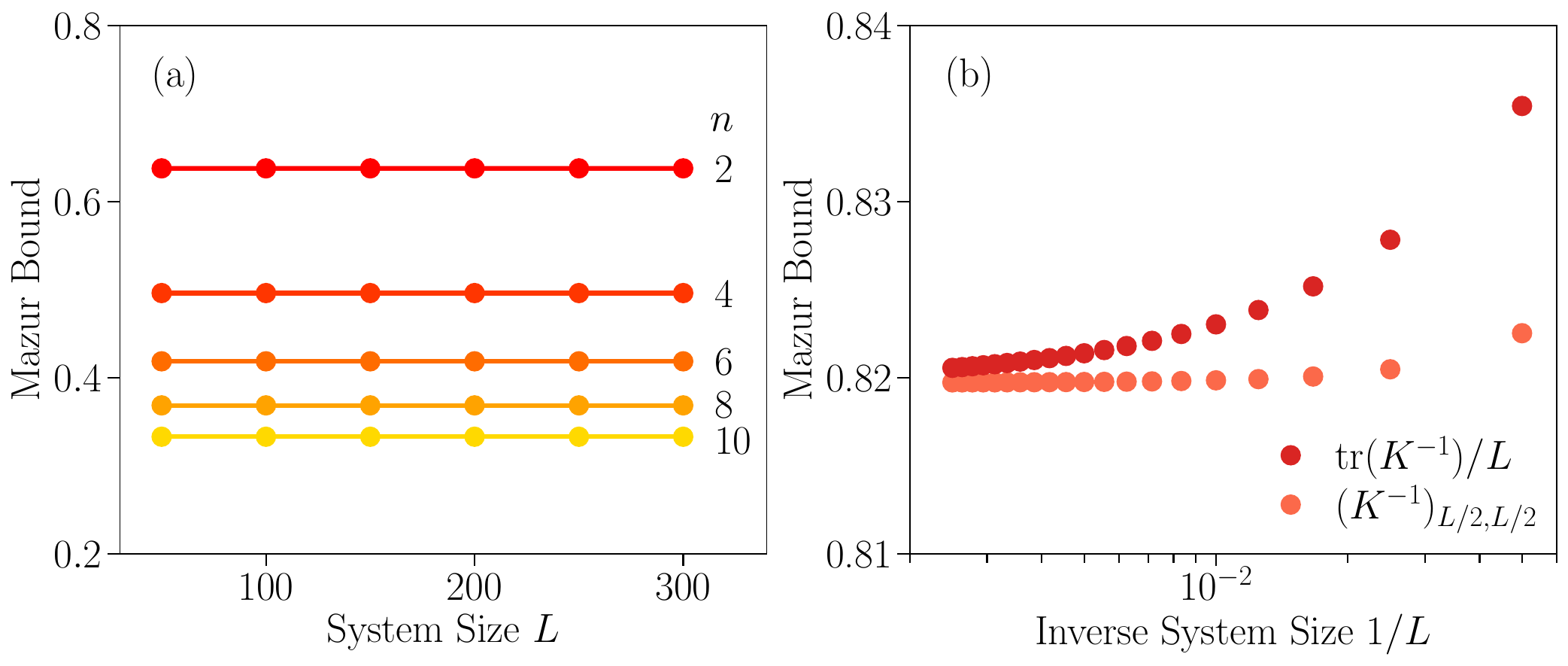}
	\caption{\textbf{Finite-size scaling of Mazur's bound.} Finite-size scaling of Mazur's bound (a) when including a single charge with modulation $\alpha^{(n)}$, and (b) when including $\mathcal{O}(L)$ charges with $n=0$.}
	\label{fig:mbscaling}
\end{figure}

\section{Counting of frozen states}\label{app:frozen}

To estimate the total number of frozen states of $G_+$ using Pauling's method, we consider the set of $5$ sites on which a single gate acts and calculate the probability $P_\dr{F}$ that the configuration cannot be changed by the effect of the gate. To do so, we first compute the probability that the gate can fire (or anti-fire) at the beginning and then take the complement. Given our five degrees of freedom $s\in\{-2,-1,0,1,2\}$ placed on a $+$-shape, we immediately see that in order for the gate to fire the central site has to be either $\pm2$ and the outer sites are restricted to $\mp\{-1,0,1,2\}$. With the total $5^5$ possible configurations, we have a ``firing'' probability $2\cdot1^1\cdot4^4/5^5=2^9/5^5$. Taking the complement of the probability gives $P_\dr{F}=1-\frac{2^9}{5^5}\approx84\%$, so more than $80\%$ of all initial $+$-shaped configurations of five charges are frozen. Say we now have a square lattice with $L^2$ sites. There are in total $(L-2)^2$ $+$-shaped configurations in the lattice all of which have the same probability $P_\dr{F}$ to be frozen initially. Neglecting the effects of overlapping configurations, the probability that the whole lattice is initially frozen and thus frozen for all times is simply given by $P_\dr{F}^{(L-2)^2}$. Therefore, the Pauling estimate for the total number of frozen states takes the form $N_\dr{F}=5^{L^2}\bigl(1-\frac{2^9}{5^5}\bigr)^{(L-2)^2}$.

\begin{figure}
    \centering
    \includegraphics[width=\linewidth]{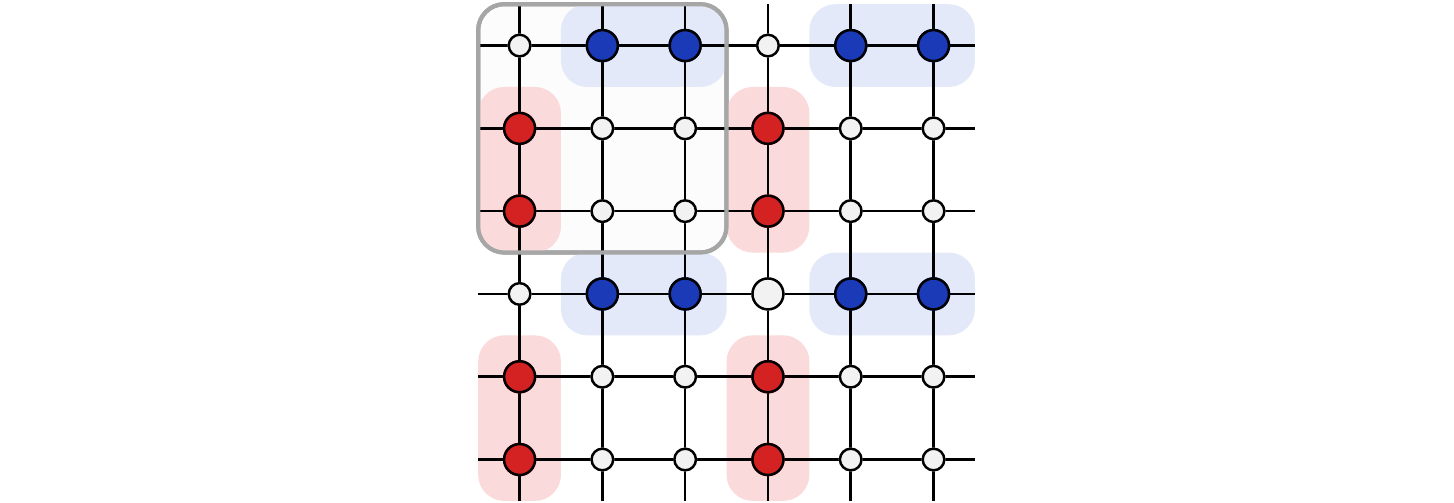}
    \caption{\textbf{Frozen lattice tiling.} Tiling defined by the unit cell (grey box) such that the spin configuration on the entire lattice is frozen for all times. Blue sites are highest, red sites lowest charge. All sites are frozen for any choice of $s_{\vec{r}} \in \{-S,\ldots,S\}$ on the white sites.
    }
    \label{fig:Tiling}
\end{figure}


\section{Numerical solution of linear recurrence relations} \label{app:Numerics}

Finding a close-form solution of linear recurrence relations in terms of (arbitrary) boundary conditions is in general difficult. Hence, given the wide variety of recurrence relations we study in this work, we instead follow a numerical approach based on \textit{Iterative Stencil Loops} \cite{Roth_1997,sloot2002computational}. We then compare the results to those analytically obtained in the continuum limit.
As in the case of $G_\boxplus$, we can arrange the general recurrence relation
\begin{equation}
    n_{\vec{0}}\alpha_{\vec{r}}+\sum_{\vec{v}\in\mathcal{N}_{\vec{r}}}n_{\vec{v}}\alpha_{\vec{v}}=0,
\end{equation}
to yield an ``averaging''-type recurrence relation:
\begin{equation}\label{eq:recAvg}
    \alpha_{\vec{r}}=-\frac{1}{n_{\vec{0}}} \sum_{{\vec{v}\in\mathcal{N}_{\vec{r}}}}n_{\vec{v}}\alpha_{\vec{v}}.
\end{equation}
A route for solving the latter is recursively constructing a solution by obtaining the value at a site $\vec{r}$ as given by the RHS of Eq.~\eqref{eq:recAvg}. To achieve this, we (1) start with an empty lattice initialized with the imposed boundary conditions $\alpha_{\vec{r}}^{\dr{B}}$ for $\vec{r}\in B$, (2) take the right sum as our ``stencil'' \cite{Roth_1997} and (3) iterating over the lattice updating every $\alpha_{\vec{r}}$ in the bulk with the respective weighted sum of its neighbors, in the sense that
\begin{equation}
    \alpha_{\vec{r}}^{(t)}\xrightarrow{\quad}\alpha_{\vec{r}}^{(t+1)}=-\frac{1}{n_{\vec{0}}} \sum_{{\vec{v}\in\mathcal{N}_{\vec{r}}}}n_{\vec{v}}\alpha_{\vec{v}}^{(t)},
\end{equation}
where $\alpha_{\vec{r}}^{(t)}$ and $\alpha_{\vec{r}}^{(t+1)}$ are the values at the site before and after update, respectively. In particular for a lattice of size $|\mathcal{L}|$, this results to each iteration step taking $\mathcal{O}(|\mathcal{L}|^2)$ applications of the stencil. To terminate the algorithm, we compute the difference between two consecutive iteration steps $\mathcal{L}_t$ and $\mathcal{L}_{t+1}$ and stop the algorithm whenever a certain threshold is reached $\sum_{\vec{r}\in D}\lvert\alpha^{(t+1)}_{\vec{r}}-\alpha^{(t)}_{\vec{r}}\rvert<\varepsilon$. In particular, we choose $\varepsilon<10^{-10}$. In the case of discrete harmonic functions, the algorithm is known to converge to the unique solution of the discrete Dirichlet's problem. 

\section{Solution of Eq.~\texorpdfstring{\eqref{eq:Direchlet}}{(\ref{eq:Direchlet})} via separation of variables}  \label{sec:sepvar} 
 Analogously to the continuum case, we can solve Eq.~\eqref{eq:Direchlet} via separation of variables, i.e.,  $\alpha_{i,j}=X_iY_j$. Let us explicitly follow the main procedure which will teach us something about the structure of the solutions.
     Using this ansatz the recurrence relation becomes
     \begin{equation}
     	4X_iY_j= \left(X_{i+1}+X_{i-1}\right)Y_j + X_i\left(Y_{j+1}+Y_{j-1}\right).
     \end{equation}
 
      \noindent After dividing by $X_iY_j$ (assuming $\alpha_{i,j}$ does not vanish in the bulk) we find
      \begin{equation}
      	4= \frac{X_{i+1}+X_{i-1}}{X_i} + \frac{Y_{j+1}+Y_{j-1}}{Y_j}.
      \end{equation}
     
      \noindent This implies that each term on the right hand side is a constant function of its argument and hence, they satisfy the one-dimensional recurrence relations 
      \begin{align} \label{eq:1Drec_in2d}
      	\begin{cases}
       	X_{i+1}-2X_i+X_{i-1}=\lambda X_i,  \\[1ex]
      	 Y_{j+1}-2Y_j+Y_{j-1}=-\lambda Y_j ,
      	\end{cases}
      \end{align}
     along the horizontal and vertical lattice directions respectively, for certain values of $\lambda \in \mathbb{R}$ which are fixed by the boundary conditions.
     To proceed we notice that being Eq.~\eqref{eq:2drec} linear, the Dirichlet problem can be solved adding up the solutions of four different Dirichlet problems with vanishing boundaries conditions except at a given boundary.
     This means that a general  solution can be written as
     \begin{equation}
     	\alpha_{i,j} = \alpha^{(1)}_{i,j} + \alpha^{(2)}_{i,j} + \alpha^{(3)}_{i,j} + \alpha^{(4)}_{i,j},
     \end{equation}
     with the different contributions solving the boundary problems
     \begin{equation}
      \left\{\begin{aligned}
          Y_{0}&=Y_{N+1}=0 &&\textrm{and} &X_0&=0,	&\textrm{with}&& \alpha_{N+1,j}^{(1)}&=\chi_2(j)&&\quad(\dr{I})\\[1ex]
	    Y_{0}&=Y_{N+1}=0 && \textrm{and} &X_{N+1}&=0,	&\textrm{with}&& \alpha_{0,j}^{(2)}&=\chi_1(j) &&\quad(\dr{II}) \\[1ex]
	    X_0&=X_{N+1}=0 && \textrm{and} &Y_{0}&=0, &\textrm{with}&& \alpha_{i,N+1}^{(3)}&=\eta_1(j)&&\quad(\dr{III})	\\[1ex]
		Y_{0}&=Y_{N+1}=0 && \textrm{and} &Y_{N+1}&=0,	&\textrm{with}&& \alpha_{i,0}^{(4)}&=\eta_2(j)&&\quad(\dr{IV})
      \end{aligned}\right.
     \end{equation}
 
     For example, solving problem (I) leads to $Y_{n,j}=\sin(k^n_y j)$ and $X_{n,i}=\sinh(\kappa^n_x i)$ with $k^n_y=n\pi/(N+1)$ for $n= 0,\dots,N+1$. This also restricts the values of $0<\lambda<4$ to those satisfying $\cos(k_x^n)=1-\lambda_n/2$, and in turn $\cosh(\kappa^n_x)=1+\lambda_n/2$.
     Hence, we have found the fundamental solutions $A_n \sinh(\kappa^n_x i) \sin(k^n_y j)$, which by linear superposition lead to the general solution of problem (I)
     \begin{equation} \label{eq:pro_1}
     	\alpha^{(1)}_{i,j}=\sum_{n=0}^{N+1} A_n \sinh(\kappa^n_x i) \sin(k^n_y j),
     \end{equation}
     where the coefficients $A_n$ are fixed by
     \begin{equation}
     	\chi_2(j)=\sum_{n=0}^{N+1} A_n \sinh(\kappa^n_x (N+1)) \sin(k^n_y j),
     \end{equation} 
     i.e., proportional to the Fourier coefficients of $\chi_2(j)$.

     Solving the three remaining problems, which again involve products of sinusoidal and hyperbolic functions, one can find a general solution of Eq.\eqref{eq:Direchlet} for any choice of boundary functions.
     This implies that the fundamental solutions $\sinh(\kappa^n_x i) \sin(k^n_y j)$, $ \sinh(\kappa^n_x[ i -(N+1)]) \sin(k^n_y j)$, $\sin(k^n_x i) \sin(\kappa^n_y j)$ , $\sin(k^n_x i) \sinh(\kappa^n_y [j-(N+1)])$ with $n=0\,\cdots, N+1$
      form a basis for solutions of Eq.~\eqref{eq:Direchlet}, showing that the number of independent symmetries scales with the linear system size.
     This approach not only gives us the fundamental solutions from where to obtain any other one, but also allows us to explicitly find particular spatial modulations that connect to the quasi-periodic and exponential modulations we encountered in the previous chapter.
     In general, we can solve Eqs.~\eqref{eq:1Drec_in2d} finding the roots of the associated characteristic polynomials $r^2 - (2\pm \lambda)r +1=0$ which take the values 
     \begin{equation}
     	\begin{cases}
     		x_{1,2}=\frac{2+\lambda}{2}\pm \frac{\sqrt{\lambda(\lambda+4)}}{2}, \\[1ex]
     		y_{1,2}=\frac{2-\lambda}{2}\mp \frac{\sqrt{\lambda(\lambda-4)}}{2},
     	\end{cases}
     \end{equation}
      for the first and second equations respectively. As the characteristic polynomial are palindromic, the two roots are inverse of each other. Whether these are real $(\eta_{x,y})$ or pure complex phases $(\dr{e}^{\dr{i} k_{x,y}})$ is determined by the sign of the discriminants $\Delta_x=\lambda(\lambda+4)\,,\Delta_y=\lambda(\lambda-4)$:
      If $\Delta_{x,y}>0$, the corresponding solution can be written in terms of hyperbolic or exponential functions; while $\Delta_{x,y}<0$ correspond to sinusoidal or complex exponential modulations.
      %
%
  %
  
  We can split the solutions into three main types: (i) $\lambda=0$ contains all multipole conserved quantities we already identified except for the $x^2-y^2$ quadratic moment. As this solution does not factorize in horizontal and vertical directions, but rather as $\alpha_{x,y}=(x+y)(x-y)$, we can only recover it by superposing many fundamental solutions.
  Alternatively, these could be explicitly found when writing the recurrence relation in terms of center of mass $s=i+j$ and relative $r=i-j$ coordinates, i.e., $\alpha_{i,j}=X_sY_r$. 
  %
%
The second case (ii) corresponds to exponential (hyperbolic) solutions, which are localized near the corners of the 2D lattice when $|\lambda|>4$. These can be directly found solving Eq.\eqref{eq:2drec} with the ansatz $\alpha_{i,j}=(\xi_x)^i(\xi_y)^j$.
Finally, the third case (iii) corresponds to the product of sinusoidal and hyperbolic solutions along orthogonal directions, like e.g., $\alpha_{i,j}=(\xi_x)^i\dr{e}^{-\dr{i} k_y j}$.
  Thus, while solutions with a finite momentum mode along one of the lattice directions exist (for $|\lambda|<4$), these are exponentially damped along the orthogonal direction and do not contribute to spin correlations in the bulk.
  Moreover, we can also rule out solutions of the form $\alpha_{i,j}\sim \dr{e}^{\dr{i} k_x i}\dr{e}^{\dr{i} k_y j}$ which would have led to conserved finite modes in the bulk correlations.

\end{appendix}




\nolinenumbers

\end{document}